\newtheorem{theorem}{Theorem}
\newtheorem{conjecture}{Conjecture}
\newtheorem{claim}{Claim}
\newtheorem{definition}{Definition}
\newtheorem{observation}{Observation}
\newtheorem{proposition}{Proposition}
\newtheorem{lemma}[theorem]{Lemma}
\theoremstyle{definition}
\newtheorem{remark}{Remark}
\newcommand{\expct}[2]{\mathbb{E}_{#1}\left[#2\right]}
\newcommand{\prob}[2]{\mathsf{Pr}_{#1}\left[#2\right]}
\newcommand{\var}[2]{\mathsf{Var}_{#1}\left[#2\right]}
\newcommand{\R}{\mathbb{R}}
\newcommand{\sugg}{\mathsf{Sugg}}
\newcommand{\opt}{\mathsf{Opt}}
\newcommand{\dtoone}[1]{{\sf ${#1}$-to-$1$}}
\newcommand{\xiffy}[2]{\ensuremath{{#1} \mathcal{\leftrightarrow} {#2}}}
\newcommand{\intset}[1]{\ensuremath{\{1,\ldots,{#1}\}}}
\newcommand{\infl}{\ensuremath{\mathsf{Inf}}}
\newcommand{\maxcolor}[1]{{\sf Max ${#1}$-Colorable Subgraph}}
\newcommand{\maxcut}[1]{{\sf Max ${#1}$-Cut}}
\newcommand{\mcut}{{\sf MaxCut}}
\newcommand{\eps}{\varepsilon}
\newcommand{\texsup}[2]{\ensuremath{\text{#1}^{\text{#2}}}}
\renewcommand{\epsilon}{\varepsilon}
\renewcommand{\ge}{\geqslant}
\renewcommand{\le}{\leqslant}
\newcommand{\threehardness}{33}
\newcommand{\threehardnessdec}{32}
\title{{\bf Improved Inapproximability Results for \\ Maximum $k$-Colorable Subgraph}}
\author{{\sc Venkatesan Guruswami}\footnote{Research supported in part
    by a Packard Fellowship. Email: {\tt guruswami@cmu.edu}, {\tt asinop@cs.cmu.edu}} \and {\sc Ali Kemal
    Sinop}\footnotemark[\value{footnote}]}
\date{Computer Science Department \\  Carnegie Mellon University \\ Pittsburgh, PA 15213.} 
\begin{document}

\maketitle
\thispagestyle{empty}

\begin{abstract}
  We study the maximization version of the fundamental graph coloring
  problem. Here the goal is to color the vertices of a $k$-colorable
  graph with $k$ colors so that a maximum fraction of edges are
  properly colored (i.e. their endpoints receive different colors). A
  random $k$-coloring properly colors an expected fraction
  $1-\frac{1}{k}$ of edges. We prove that given a graph promised to be
  $k$-colorable, it is NP-hard to find a $k$-coloring that properly
  colors more than a fraction $\approx 1-\frac{1}{\threehardness k}$
  of edges. Previously, only a hardness factor
  of $1- O\bigl(\frac{1}{k^2}\bigr)$ was known. Our result pins down
  the correct asymptotic dependence of the approximation factor on
  $k$. Along the way, we prove that approximating the Maximum
  $3$-colorable subgraph problem within a factor greater than $\frac{32}{33}$
  is NP-hard.

  Using semidefinite programming, it is known that one can do better
  than a random coloring and properly color a fraction $1-\frac{1}{k}
  +\frac{2 \ln k}{k^2}$ of edges in polynomial time. We show that,
  assuming the $2$-to-$1$ conjecture, it is hard to properly color
  (using $k$ colors) more than a fraction $1-\frac{1}{k} + O\left(\frac{\ln
    k}{k^2}\right)$ of edges of a
  $k$-colorable graph.

\end{abstract}

\newpage

\section{Introduction}
\vspace{-1ex}
\subsection{Problem statement}
A graph $G=(V,E)$ is said to be $k$-colorable for some positive
integer $k$ if there exists a $k$-coloring $\chi: V \rightarrow
\{1,2,\dots,k\}$ such that for all edges $(u,v) \in E$, $\chi(u) \neq
\chi(v)$. For $k \ge 3$, finding a $k$-coloring of a $k$-colorable
graph is a classic NP-hard problem.  The problem of coloring a graph
with the fewest number of colors has been extensively studied.  In
this paper, our focus is on hardness results for the following
maximization version of graph coloring: Given a $k$-colorable graph
(for some fixed constant $k \ge 3$), find a $k$-coloring that
maximizes the fraction of properly colored edge. (We say an edge is
properly colored under a coloring if its endpoints receive distinct
colors.) Note that for $k=2$ the problem is trivial --- one can find a
proper $2$-coloring in polynomial time when the graph is bipartite
($2$-colorable).

We will call this problem \maxcolor{k}. The problem is equivalent to
partitioning the vertices into $k$ parts so that a maximum number of
edges are cut. This problem is more popularly referred to as
\maxcut{k} in the literature; however, in the \maxcut{k} problem the
input is an arbitrary graph that need not be $k$-colorable. To
highlight this difference that our focus is on the case when the input
graph is $k$-colorable, we use \maxcolor{k} to refer to this
variant. We stress that we will use this convention throughout the
paper: \maxcolor{k} {\bf always} {\em refers to the ``perfect
  completeness'' case, when the input graph is
  $k$-colorable.}\footnote{While a little non-standard, this makes our
  terminology more crisp, as we can avoid repeating the fact that the
  hardness holds for $k$-colorable graphs in our statements.} Since
our focus is on hardness results, we note that this restriction only
makes our results stronger.

A factor $\alpha=\alpha_k$ approximation algorithm for \maxcolor{k} is
an efficient algorithm that given as input a $k$-colorable graph
outputs a $k$-coloring that properly colors at least a fraction
$\alpha$ of the edges. We say that \maxcolor{k} is NP-hard to
approximate within a factor $\beta$ if no factor $\beta$ approximation
algorithm exists for the problem unless ${\rm P} = {\rm NP}$. The goal
is to determine the approximation threshold of \maxcolor{k}: the
largest $\alpha$ as a function of $k$ for which a factor $\alpha$
approximation algorithm for \maxcolor{k} exists.
\vspace{-1ex}
\subsection{Previous results}
The algorithm which simply picks a random $k$-coloring, without even
looking at the graph, properly colors an expected fraction $1-1/k$ of
edges. Frieze and Jerrum~\cite{FJ97} used semidefinite programming to
give a polynomial time factor $1-1/k + 2 \ln k/k^2$ approximation
algorithm for \maxcut{k}, which in particular means the algorithm will
color at least this fraction of edges in a $k$-colorable graph. This
remains the best known approximation guarantee for \maxcolor{k}\ to
date. Khot, Kindler, Mossel, and O'Donnell~\cite{KKMO} showed that
obtaining an approximation factor of $1-1/k + 2 \ln
k/k^2 + \Omega(\ln\ln k/k^2)$ for \maxcut{k}\ is Unique Games-hard,
thus showing that the Frieze-Jerrum algorithm is essentially the best
possible. However, due to the ``imperfect completeness'' inherent to
the Unique Games conjecture, this hardness result does {\em not} hold
for \maxcolor{k} when the input is required to be $k$-colorable.

For \maxcolor{k}, the best hardness known prior to our work was a
factor $1-\Theta(1/k^2)$. This is obtained by combining an
inapproximability result for \maxcolor{3} due to
Petrank~\cite{petrank} with a reduction from Papadimitriou and
Yannakakis~\cite{PY91}. It is a natural question whether is an
efficient algorithm that could properly color a fraction $1-
1/k^{1+\eps}$ of edges given a $k$-colorable graph for some absolute
constant $\eps > 0$. The existing hardness results do not rule out the
possibility of such an algorithm.

For \maxcut{k}, a better hardness factor was shown by Kann, Khanna, Lagergren, and Panconesi~\cite{KKLP} ---
for some absolute constants $\beta > \alpha > 0$, they
showed that it is NP-hard to distinguish graphs that have a $k$-cut in
which a fraction $(1-\alpha/k)$ of the edges cross the cut from graphs whose
Max $k$-cut value is at most a fraction $(1-\beta/k)$ of edges.
Since \mcut\ is easy when the graph is $2$-colorable, this reduction
does not yield any hardness for \maxcolor{k}.
\vspace{-1ex}
\subsection{Our results}
Petrank~\cite{petrank} showed the existence of a $\gamma_0 > 0$ such that
it is NP-hard to find a $3$-coloring that properly colors more than a
fraction $(1-\gamma_0)$ of the edges of a $3$-colorable graph. The
value of $\gamma_0$ in \cite{petrank} was left unspecified and would
be very small if calculated. The reduction in \cite{petrank} was rather
complicated, involving expander graphs and starting from the weak
hardness bounds for bounded occurrence satisfiability. We prove that
the NP-hardness holds with $\gamma_0 = \frac{1}{\threehardness}$. In
other words, it is NP-hard to obtain an approximation ratio bigger
than $\frac{\threehardnessdec}{\threehardness}$ for \maxcolor{3}. The
reduction is from the constraint satisfaction problem corresponding to
the adaptive $3$-query PCP with perfect completeness from \cite{GLST}.

By a reduction from \maxcolor{3}, we prove that for every $k \ge 3$,
the \maxcolor{k} is NP-hard to approximate within a factor greater
than $\approx 1-\frac{1}{\threehardness k}$
(Theorem~\ref{thm:kcolor-NP-hard}).  This identifies the correct
asymptotic dependence on $k$ of the best possible approximation factor
for \maxcolor{k}. The reduction is similar to the one in \cite{KKLP},
though some crucial changes have to be made in the construction and
some new difficulties overcome in the soundness analysis when reducing
from \maxcolor{3} instead of \mcut.


In the quest for pinning down the {\em exact} approximability of
\maxcolor{k}, we prove the following {\em conditional}
result. Assuming the so-called $2$-to-$1$ conjecture, it is hard to
approximate \maxcolor{k} within a factor $1-\frac{1}{k} + O\left(\frac{\ln
  k}{k^2}\right)$. In other words, the Frieze-Jerrum algorithm is optimal up to
lower order terms in the approximation ratio {\em even for instances of
\maxcut{k} where the graph is $k$-colorable.} %

Unlike the Unique Games Conjecture (UGC), the $2$-to-$1$ conjecture
allows perfect completeness, i.e., the hardness holds even for
instances where an assignment satisfying {\em all} constraints
exists. The $2$-to-$1$ conjecture was used by Dinur, Mossel, and
Regev~\cite{DMR} to prove that for every constant $c$, it is NP-hard
to color a $4$-colorable graph with $c$ colors. We analyze a similar
reduction for the $k$-coloring case when the objective is to maximize
the fraction of edges that are properly colored by a $k$-coloring. Our
analysis uses some of the machinery developed in \cite{DMR}, which in
turn extends the invariance principle of \cite{MOO}. The hardness
factor we obtain depends on the spectral gap of a certain $k^2 \times
k^2$ stochastic matrix. 

\begin{remark}
In general it is far from clear which Unique
Games-hardness results can be extended to hold with perfect
completeness by assuming, say, the $2$-to-$1$ (or some related)
conjecture. In this vein, we also mention the result of O'Donnell and
Wu~\cite{odonnell-wu} who showed a tight hardness for approximating
satisfiable constraint satisfaction problems on $3$ Boolean variables
assuming the $d$-to-$1$ conjecture for any fixed $d$. While the UGC
assumption has led to a nearly complete understanding of the
approximability of constraint satisfaction problems~\cite{Raghavendra08}, the
approximability of {\em satisfiable} constraint satisfaction problems
remains a mystery to understand in any generality.
\end{remark}

\begin{remark}
\label{rem:wt-unwt}
  It has been shown by Crescenzi, Silvestri and Trevisan \cite{CST01}
  that any hardness result for weighted instances of \maxcut{k}
  carries over to  unweighted instances {assuming} the
  total edge weight is polynomially bounded. In fact, their reduction
  preserves $k$-colorability, so an inapproximability result for the weighted
  \maxcolor{k} problem also holds for the unweighted version. Therefore
  all our hardness results hold for the unweighted \maxcolor{k} problem.
\end{remark}

\section{Unconditional Hardness Results for \maxcolor{k}}
We will first prove a hardness result for \maxcolor{3}, and then
reduce this problem to \maxcolor{k}.
\subsection{Inapproximability result for \maxcolor{3}}
\label{sec:3color-hard}
Petrank~\cite{petrank} showed that \maxcolor{3} is NP-hard to
approximate within a factor of $(1-\gamma_0)$ for some constant
$\gamma_0 > 0$. This constant $\gamma_0$ is presumably very small,
since the reduction starts from bounded occurrence satisfiability (for
which only weak inapproximability results are known) and uses expander
graphs. We prove a much better inapproximability factor below, via a
simpler proof.


\begin{theorem}[\maxcolor{3} Hardness]
\label{thm:3color}
  The \maxcolor{3} problem is NP-hard to approximate within a factor of
  $\frac{\threehardnessdec}{\threehardness}+\eps$ for any constant $\eps > 0$.
\end{theorem}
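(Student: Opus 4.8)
The plan is to reduce from the adaptive $3$-query PCP with perfect completeness of Guruswami--Lewin--Sudan--Trevisan \cite{GLST}. That PCP gives a constraint satisfaction problem whose constraints are of a specific ``non-Boolean'' type (in \cite{GLST} the tests read three bits of a proof and, conditioned on the values read, accept an assignment determined by a small circuit), with perfect completeness $1$ and soundness $\frac12 + \eps$. First I would recall/restate the precise form of the $\sugg$-based constraints from \cite{GLST}: a constraint involves a handful of variables over a small alphabet, has a satisfying assignment, and any assignment violates at least a $(\frac12-\eps)$-fraction of them in the no-case. The core of the proof is then a \emph{gadget}: for each constraint of the \cite{GLST} CSP, build a small $3$-colorable graph (with weighted edges) such that (i) any proper $3$-coloring of the gadget encodes an assignment to the constraint's variables that satisfies it, and (ii) if the constraint is violated by the encoded assignment, then any $3$-coloring of the gadget must miscolor at least a fixed fraction $\rho$ of the gadget's edge weight; conversely a satisfying assignment extends to a proper $3$-coloring. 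Standard ``variable gadgets'' (a triangle whose three vertices represent the three colors, with equality enforced by identification/paths of odd length) plus ``constraint-checking gadgets'' wired to the GLST circuit do this; one must be careful that the three colors play symmetric roles so that spurious global recolorings do not help, which is the usual reason such reductions work with $3$ colors but not $2$.

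Next I would do the bookkeeping that turns the gadget into a numerical hardness factor. Let $N$ be the total edge weight summed over all gadgets. In the yes-case, a satisfying assignment yields a global $3$-coloring miscoloring $0$ edges, so a $(1-\gamma_0)$-approximation would properly color a $(1-\gamma_0)$-fraction; in the no-case, a $(\frac12-\eps)$-fraction of constraints are violated and each contributes at least $\rho$ times its weight in miscolored edges, so the optimum $3$-coloring miscolors at least a $\approx \frac{\rho}{2}\cdot(\text{weight share of one constraint's gadget})$-fraction overall. Choosing the gadget to make these constants come out as sharply as possible — this is where the specific number $\threehardness$ is forced — gives NP-hardness of distinguishing miscolored-fraction $0$ from miscolored-fraction $\ge \frac{1}{\threehardness}-\eps'$, i.e. NP-hardness of approximating \maxcolor{3} within $\frac{\threehardnessdec}{\threehardness}+\eps$. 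I would present the gadget explicitly and compute its completeness/soundness ratio, since all the quantitative content lives there. Finally, by Remark~\ref{rem:wt-unwt} the weighted hardness transfers to unweighted graphs, so no separate argument is needed for that.

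The main obstacle I expect is the gadget design and its \emph{soundness} analysis: I need a $3$-colorable gadget that simulates the adaptive GLST acceptance predicate while (a) being genuinely $3$-colorable in the yes-case — the adaptivity of the GLST test makes the predicate a branching computation rather than a single linear equation, so the wiring is intricate — and (b) having a soundness constant $\rho$ that is not diluted by the need for many auxiliary vertices; each auxiliary gadget edge that can always be properly colored regardless of the encoded assignment hurts the ratio. Getting $\gamma_0$ all the way up to $\frac{1}{\threehardness}$ requires an economical gadget, and verifying that \emph{every} $3$-coloring (not just those respecting the intended color-to-value correspondence) still miscolors the claimed fraction — accounting for the $6$ global color permutations and for ``cheating'' colorings of the triangle variable-gadgets — is the delicate step. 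The rest (the reduction bookkeeping, amplification of soundness in the base PCP to absorb $\eps$, and the weighted-to-unweighted transfer) is routine.
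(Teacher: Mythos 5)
Your high-level plan matches the paper's: reduce from the CSP underlying the GLST adaptive $3$-query PCP (perfect completeness, soundness $\frac12+\eps$), build a weighted graph via a per-constraint gadget plus global truth-value machinery, do the weight bookkeeping to extract $\frac1{33}$, and invoke Remark~\ref{rem:wt-unwt} for the unweighted version. So the strategy is right. The genuine gap is that you explicitly defer the gadget and the soundness accounting, and that is where essentially all the content of the theorem lives; as written, there is no proof.

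Two concrete points about what fills the gap in the paper's argument, which differ a bit from the shape you sketched. First, the paper uses a specific, cleaned-up form of the GLST constraint, $(x_i \vee (Y_j=z_k)) \wedge (\overline{x_i} \vee (Y_j=z_l))$ over a \emph{tripartite} variable set $\cX,\cY,\cZ$ with $\cZ$ never negated; the tripartiteness and the single-negation restriction are obtained by a small modification of the GLST proof tables (splitting off a $C$-table) and are exactly what makes the gadget compact enough to yield $\frac1{33}$. Second, the soundness analysis is not the purely local ``each violated constraint's gadget miscolors a $\rho$-fraction of its own weight'' argument you describe. Instead, there is a global triangle $\{R,T,F\}$ of weight $m/2$ per edge, edges of weight $\Delta(x_i)/2$ and $\Delta(z_l)/2$ from each variable node to $R$, and triangles $\{y_j,\overline{y_j},R\}$ of weight $w_j$. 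These weights are tuned so that \emph{any} $3$-coloring with miscolored weight below $m/2$ can be locally re-randomized, without increasing the miscolored weight, into one where $R,T,F$ receive three distinct colors, every variable node is colored $\chi(T)$ or $\chi(F)$, and $\chi(y_j)\neq\chi(\overline{y_j})$. Only after this normalization does one read off a Boolean assignment and argue that a properly colored local gadget (a $10$-edge graph on $x_i,Y_j,z_k,z_l,R$ and four auxiliary nodes $A,A',B,B'$) forces its constraint to be satisfied; since the gadgets are edge-disjoint, $\tau$ miscolored weight implies at most $\tau$ violated constraints. The total weight comes out to $\frac{33}{2}m$, which together with soundness gap $(\frac12-\eps)m$ gives the $\frac{1-2\eps}{33}$ miscolored fraction. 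Without exhibiting this gadget and the normalization step, the constant $33$ is unmotivated, and one cannot rule out ``cheating'' colorings — the very difficulty you flagged.
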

\newcommand{\cV}{{\cal V}}
\newcommand{\cX}{{\cal X}}
\newcommand{\cY}{{\cal Y}}
\newcommand{\cZ}{{\cal Z}}
\begin{proof}
  For the proof of this theorem, we will use reduce from a hard to
  approximate constraint satisfaction problem (CSP) underlying the
  adaptive 3-query PCP given in \cite{GLST}. This PCP has perfect
  completeness and soundness $1/2+\eps$ for any desired constant
  $\eps$ (which is the best possible for $3$-query PCPs).

  We first state the properties of the CSP. An instance of the CSP
  will have variables partitioned into three parts $\cX,\cY$ and
  $\cZ$. Each constraint will be of the form $(x_i \vee (Y_j = z_k))
  \wedge (\overline{x_i} \vee (Y_j = z_l))$, where $x_i \in \cX$,
  $z_k,z_l \in \cZ$ are variables (unnegated) and $Y_j$ is a literal
  ($Y_j \in \{y_j, \overline{y_j}\}$ for some variable $y_j \in
  \cY$). 
For {\sc Yes} instances of the
  CSP, there will be a Boolean assignment that satisfies {\bf all} the
  constraints. For {\sc No} instances, every assignment to the
  variables will satisfy at most a fraction $(1/2+\eps)$ of the
  constraints. 

\begin{remark}
  We remark the condition that the instance is tripartite, and that the
  variables in $\cZ$ never appear negated are not explicit in
  \cite{GLST}. But these can be ensured by an easy modification to the
  PCP construction in \cite{GLST}. The PCP in \cite{GLST} has a
  bipartite structure: the proof is partitioned into two parts called
  the $A$-tables and $B$-tables, and each test consists of probing one
  bit $A(f)$ from an $A$ table and 3 bits $B(g),B(g_1),B(g_2)$ from
  the $B$ table, and checking $(A(f) \vee (B(g) = B(g_1)) \wedge
  (\overline{A(f)} \vee (B(g) = B(g_2))$. Further these tables are
  {\em folded} which is a technical condition that corresponds to the
  occurrence of negations in the CSP world. If the queries at
  locations $g_1$ and $g_2$ are made in a parallel $C$-table, and even
  if the $C$-table is not folded (though the $A$ and $B$ tables need
  to be folded), one can verify that the analysis of the PCP
  construction still goes through. This then translates to a CSP with
  the properties claimed above.
\end{remark}  

\begin{figure}[ht]
  \centering
  \begin{minipage}[b]{0.44\linewidth}
    \centering
    \includegraphics[scale=1]{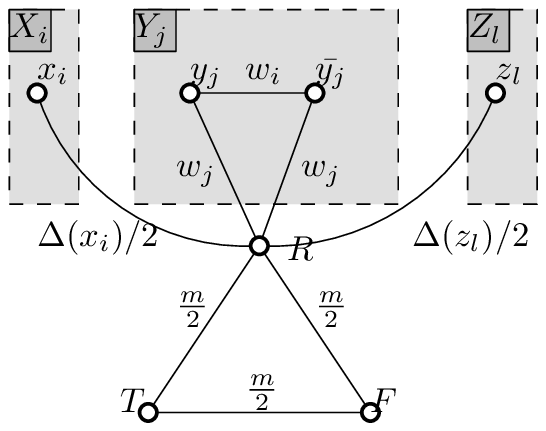}
    \caption{Global gadget for truth value assignments. Blocks $X_i$,
      $Y_j$ and $Z_l$ are replicated for all vertices in
      $\mathcal{X}$, $\mathcal{Y}$ and $\mathcal{Z}$. Edge weights are
      shown next to each edge.}
    \label{fig:gadget-global}  
  \end{minipage}
  \hspace{0.05\linewidth}
  \begin{minipage}[b]{0.47\linewidth}
    \centering
    \includegraphics[scale=1]{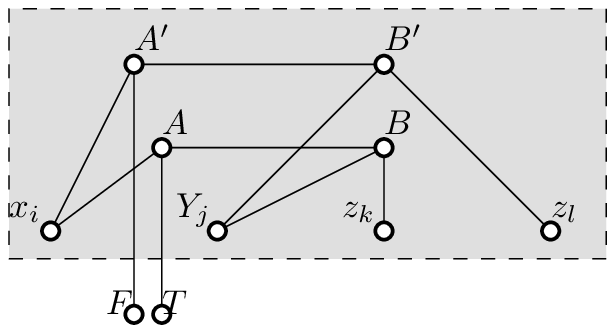}
    \caption{Local gadget for each constraint of the form $(x_i \vee
      Y_j = z_k) \wedge (\overline{x_i} \vee Y_j = z_l)$. All edges
      have unit weight. Labels $A,A',B,B'$ refer to the local nodes in
      each gadget.}
    \label{fig:gadget-local}
  \end{minipage}  
\end{figure}

Let ${\cal I}$ be an instance of such a CSP with $m$ constraints of
the above form on variables $\cV = \cX \cup \cY \cup \cZ$. Let
$\cX=\{x_1,x_2,\dots,x_{n_1}\}$, $\cY = \{y_1,y_2,\dots,y_{n_2}\}$ and
$\cZ=\{z_1,z_2,\dots,z_{n_3}\}$. From the instance ${\cal I}$ we
create a graph $G$ for the \maxcolor{3} problem as follows.  There is
a node $x_i$ for each variable $x_i \in \cX$, a node $z_l$ for each
$z_l \in \cZ$, and a pair of nodes $\{y_j,\overline{y_j}\}$ for the
two literals corresponding to each $y_j \in \cY$. There are also three
global nodes $\{R,T,F\}$ representing boolean values which are
connected in a triangle with edge weights $m/2$ (see
Fig. \ref{fig:gadget-global}).

For each constraint of the CSP, we place the local gadget specific to
that constraint shown in Figure \ref{fig:gadget-local}. Note that
there are 10 edges of unit weight in this gadget. The nodes
$y_j,\overline{y_j}$ are connected to node $R$ by a triangle whose
edge weights equal $w_j =
\frac{\Delta(y_j)+\Delta(\overline{y_j})}{2}$. Here $\Delta(X)$
denotes the total number of edges going from node $X$ \emph{into} all
the local gadgets. The nodes $x_i$ and $z_l$ connected to $R$ with an
edge of weight $\Delta(x_i)/2$ and $\Delta(z_l)/2$ respectively. The proofs of the following lemmas appear in Appendix~\ref{app:pf-3col}.
\begin{lemma}[Completeness]
\label{lem:3col-compl}
  Given an assignment of variables $\sigma: \cV \rightarrow \{0,1\}$ which
  satisfies at least $c$ of the constraints, we can construct a
  $3$-coloring of $G$ with at most $m-c$ improperly colored edges
  (each of weight 1).
\end{lemma}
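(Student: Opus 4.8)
The plan is to exhibit the 3-coloring explicitly from the Boolean assignment $\sigma$, color-by-color, and then count the unsatisfied-constraint contributions to improperly colored edges. The three global nodes $R,T,F$ play the role of the three colors, so I would fix $\chi(R)=\mathsf{r}$, $\chi(T)=\mathsf{t}$, $\chi(F)=\mathsf{f}$ once and for all; since they form a triangle of weight $m/2$ edges, this triangle is always properly colored and contributes nothing. The idea is that $\mathsf{t},\mathsf{f}$ encode the Boolean values $1,0$, and the extra color $\mathsf{r}$ (the "$R$"-color) is a slack color used inside each local gadget.

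First I would color the variable nodes according to truth values: set $\chi(x_i)=\mathsf{t}$ if $\sigma(x_i)=1$ and $\chi(x_i)=\mathsf{f}$ otherwise, similarly for the $z_l$ nodes, and for each $y_j$ color the pair $\{y_j,\overline{y_j}\}$ with $\{\mathsf{t},\mathsf{f}\}$ in the orientation dictated by $\sigma(y_j)$. I need to check the auxiliary triangles: the edge $x_i$–$R$ of weight $\Delta(x_i)/2$ is properly colored since $\chi(x_i)\in\{\mathsf{t},\mathsf{f}\}\ne\mathsf{r}$, and likewise $z_l$–$R$; the triangle on $\{y_j,\overline{y_j},R\}$ has all three distinct colors, so it is properly colored for any weight $w_j$. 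Hence \emph{all} the weighted edges outside the local gadgets are properly colored regardless of $\sigma$ — the only improperly colored edges will be unit-weight edges inside local gadgets, which is what the lemma claims (each of weight $1$).

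The core step is a gadget analysis: for a single constraint $(x_i\vee(Y_j=z_k))\wedge(\overline{x_i}\vee(Y_j=z_l))$ whose "boundary" nodes $x_i, Y_j, z_k, z_l$ have already received colors from $\{\mathsf{t},\mathsf{f}\}$ consistent with a satisfying assignment, I would show the four internal gadget nodes $A,A',B,B'$ can be colored (using all three colors) so that all $10$ unit edges of the gadget are properly colored; and for \emph{any} boundary coloring from $\{\mathsf{t},\mathsf{f}\}$ (in particular one coming from an assignment that violates this constraint), one can still color the internal nodes so that \emph{at most one} gadget edge is improperly colored. This is a finite case check over the gadget in Figure~\ref{fig:gadget-local}: the two clauses share the literal $Y_j$, and the "$=$" tests are enforced by paths through the internal nodes; when the constraint is satisfied the consistency propagates cleanly, and when it fails exactly one edge must be sacrificed. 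Summing over the $m$ constraints, the satisfied ones ($\ge c$ of them) contribute $0$ bad edges and each of the remaining ($\le m-c$) contributes at most $1$, giving at most $m-c$ improperly colored edges in total, each of unit weight.

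The main obstacle is the gadget case analysis itself — verifying both that a satisfying boundary assignment admits a perfectly-colored gadget and that an arbitrary $\{\mathsf{t},\mathsf{f}\}$-boundary assignment forces only a single bad edge. This requires knowing the precise internal wiring of Figure~\ref{fig:gadget-local} (which nodes $A,A',B,B'$ connect to which of $x_i,Y_j,z_k,z_l$ and to each other), and checking that the slack color $\mathsf{r}$ can always absorb exactly the one "defect" corresponding to an unsatisfied clause pair. A secondary point to be careful about is the handling of negated literals $Y_j\in\{y_j,\overline{y_j}\}$: the gadget must attach to whichever of the two literal-nodes is named, and since both literal-nodes are colored (with opposite colors from $\{\mathsf{t},\mathsf{f}\}$) the argument goes through symmetrically. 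Once the per-gadget claim is nailed down, assembling the global bound is immediate.
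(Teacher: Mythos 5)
Your global strategy matches the paper's exactly: fix $\chi(R),\chi(T),\chi(F)$ to three distinct colors, color each variable node in $\{\chi(T),\chi(F)\}$ by its truth value (so every edge outside the local gadgets is automatically properly colored), and then argue per-gadget that a satisfied constraint costs $0$ edges and an unsatisfied one costs at most $1$. The per-gadget step, however, is exactly where you stop, and this is a genuine gap — you explicitly note that you would need the internal wiring of Figure~\ref{fig:gadget-local} to do the case check, and without it your argument does not close.

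Here is the missing idea, which also spares you the brute-force enumeration you were bracing for. In the gadget, $A$ is adjacent to $x_i$, $T$, and $B$, while $B$ is adjacent to $A$, $Y_j$, and $z_k$ (and symmetrically $A'$ is adjacent to $x_i$, $F$, $B'$, and $B'$ to $A'$, $Y_j$, $z_l$). Define the set of ``available'' colors $\sugg(A) = [3]\setminus\{\chi(x_i),\chi(T)\}$ and $\sugg(B) = [3]\setminus\{\chi(Y_j),\chi(z_k)\}$. Both sets are nonempty (each excludes at most two of the three colors), and the key observation is: when the clause $x_i \vee (Y_j = z_k)$ is true, at least one of $\sugg(A),\sugg(B)$ has two elements (if $\sigma(x_i)=1$ then $\chi(x_i)=\chi(T)$ so $\sugg(A)$ has size two; otherwise $Y_j=z_k$ so $\chi(Y_j)=\chi(z_k)$ and $\sugg(B)$ has size two). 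Two nonempty subsets of $[3]$, one of size $\ge 2$, always admit distinct representatives, so $A$ and $B$ can be colored from their available sets with $\chi(A)\neq\chi(B)$, properly coloring all five edges touching $A,B$. The same applies to $A',B'$ via the second clause, so a satisfied constraint costs nothing. When a clause is violated, the corresponding $\sugg(A)$ (or $\sugg(A')$) is the singleton $\{\chi(R)\}$ and it collides with $\sugg(B)$; recoloring $A$ (resp.\ $A'$) with $\chi(x_i)$ sacrifices exactly the single edge $A$--$x_i$ (resp.\ $A'$--$x_i$) while keeping the other clause's half of the gadget perfectly colored. This gives the ``at most one bad edge per violated constraint'' bound you wanted, and the final summation you wrote then goes through verbatim.
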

%
\begin{lemma}[Soundness]
\label{lem:3col-soundness}
  Given a 3-coloring of $G$, $\chi$, such that the total weight of
  edges that are not properly colored by $\chi$ is at most $\tau <
  m/2$, we can construct an assignment $\sigma' : \cV \to\{0,1\}$ to
  the variables of the CSP instance that satisfies at least $m-\tau$
  constraints.
\end{lemma}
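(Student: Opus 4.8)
The plan is to run the completeness construction of Lemma~\ref{lem:3col-compl} in reverse: given a $3$-coloring $\chi$ whose improperly colored weight is at most $\tau<m/2$, I would first normalize $\chi$ into a \emph{canonical} coloring whose behavior on the ``variable'' and ``boolean'' nodes mimics an honest Boolean assignment, without ever increasing the bad-edge weight, and then read off the assignment and charge each violated constraint to a bad gadget edge. The first observation is that, since each of the three edges of the $R$--$T$--$F$ triangle has weight $m/2>\tau$, all three must be properly colored, so $R,T,F$ receive three distinct colors; rename the colors so that $T$'s color means \emph{true}, $F$'s means \emph{false}, and $R$'s is the remaining color $\star$.

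Next I would normalize $\chi$ by a sequence of local recoloring moves. Consider a node $x_i\in\cX$: it is joined to $R$ by an edge of weight $\Delta(x_i)/2$ and has $\Delta(x_i)$ unit edges into gadgets. Freezing the colors of all its neighbors, let $b_\star,b_T,b_F$ count the gadget edges at $x_i$ that would be monochromatic if $x_i$ were colored $\star$, true, false respectively; each such edge is monochromatic in exactly one of the three cases, so $b_\star+b_T+b_F=\Delta(x_i)$, hence $\min(b_T,b_F)\le(\Delta(x_i)-b_\star)/2\le\Delta(x_i)/2$. Thus if $\chi(x_i)=\star$ (a contribution of $\Delta(x_i)/2+b_\star$ from the edges at $x_i$), recoloring $x_i$ to whichever of true/false minimizes $b$ changes the contribution to $\min(b_T,b_F)\le\Delta(x_i)/2+b_\star$; this never increases the total bad weight and affects only edges incident to $x_i$. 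The identical argument handles each $z_l\in\cZ$. For a pair $\{y_j,\overline{y_j}\}$, which lies in a triangle with $R$ all of whose edges have weight $w_j=(\Delta(y_j)+\Delta(\overline{y_j}))/2$: the only colorings of the pair leaving all three triangle edges proper are the two ``legitimate'' ones assigning $\{\text{true},\text{false}\}$ to $\{y_j,\overline{y_j}\}$, and an averaging exactly like the one above shows the smaller of the two gadget-edge penalties over the legitimate colorings is at most $w_j$, which is dominated by the (at least $w_j$) penalty paid on the triangle whenever the pair is in an illegitimate state. Iterating these moves, with potential function equal to the number of $\cX\cup\cZ$-nodes colored $\star$ plus the number of illegitimate $y$-pairs (which drops by one at each move and is untouched by moves elsewhere, hence guarantees termination), produces a coloring $\chi'$ of bad-edge weight still at most $\tau$ in which every $\cX\cup\cZ$-node and every $y_j,\overline{y_j}$ is colored true or false, $\overline{y_j}$ is colored opposite to $y_j$, and consequently \emph{every non-gadget edge is properly colored}. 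Since gadget edges have unit weight, at most $\tau$ gadget edges are improperly colored by $\chi'$.

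Now set $\sigma'(v)=1$ if node $v$ is colored true under $\chi'$ and $\sigma'(v)=0$ otherwise (for $y_j\in\cY$ read off the node $y_j$); this is well defined by the previous paragraph and is consistent with $\overline{y_j}$ acting as the negation literal. The key combinatorial property of the local gadget of Figure~\ref{fig:gadget-local} is that, for every true/false coloring of its boundary nodes $x_i,Y_j,z_k,z_l$ (together with $R$), the ten gadget edges admit a proper extension to the internal nodes $A,A',B,B'$ \emph{if and only if} the values $\sigma'$ satisfy the constraint $(x_i\vee(Y_j=z_k))\wedge(\overline{x_i}\vee(Y_j=z_l))$; the ``if'' direction is exactly what Lemma~\ref{lem:3col-compl} uses, and here the ``only if'' direction is needed. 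This is a finite case analysis over the relevant boundary colorings, exploiting that the gadget is the natural series composition, for each of the two clauses, of an equality-testing sub-gadget with an OR-testing sub-gadget. Given this, every gadget all ten of whose edges are properly colored by $\chi'$ corresponds to a satisfied constraint; since the gadgets are edge-disjoint, the number of constraints violated by $\sigma'$ is at most the number of gadgets containing a bad edge, which is at most the number of bad gadget edges, which is at most $\tau$. Hence $\sigma'$ satisfies at least $m-\tau$ constraints.

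The two steps demanding care are the normalization argument — where one must confirm that the weights $\Delta(\cdot)/2$ and $w_j$ were chosen precisely to dominate the averaging bound, and that the moves can be scheduled to terminate without introducing new defects — and, the genuinely construction-dependent part, the finite verification of the ``only if'' direction of the gadget property by inspection of Figure~\ref{fig:gadget-local}. I expect the latter to be the main obstacle to writing a clean proof.
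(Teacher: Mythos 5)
Your proposal follows the same two-stage structure as the paper's proof: first normalize the coloring so that $R,T,F$ get distinct colors, every node in $\cX\cup\cZ$ and every literal node gets a ``true/false'' color with $y_j,\overline{y_j}$ colored oppositely (so that all non-gadget edges become properly colored without increasing the bad-edge weight), then read off the Boolean assignment and charge each violated constraint to a bad edge of its (edge-disjoint) gadget. Your normalization argument via the counts $b_\star,b_T,b_F$ and the potential function is a slightly more explicit version of the paper's averaging argument (``randomly choosing the color from the two legitimate options costs in expectation at most what the $R$-edge or the $y$-triangle pays''), and is correct; the only piece you leave unverified, the ``only if'' direction of the gadget property, is exactly the short case analysis the paper carries out (if $\chi(x_i)=\chi(F)$ then $\chi(A)=\chi(R)$, forcing $\chi(Y_j)=\chi(z_k)$ for $B$ to be properly colorable, and symmetrically for $A',B'$), which you could not do blind without Figure~\ref{fig:gadget-local}. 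So this is essentially the paper's proof, just with that one finite verification deferred.
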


Returning to the proof of Theorem~\ref{thm:3color}, the total weight
of edges in $G$ is 
\[ 10 m + \frac{3 m}{2} + \underbrace{\sum_{i=1}^{n_1}
  \frac{\Delta(x_i)}{2}}_{m} + \sum_{j=1}^{n_2} 3 w_j +
\underbrace{\sum_{l=1}^{n_3} \frac{\Delta(z_l)}{2}}_{m} = \frac{27}{2}
m + \frac{3}{2} \underbrace{\sum_{j=1}^{n_2} (\Delta(y_i) +
  \Delta(\overline{y_j}))}_{2m} = \frac{33}{2} m \ . \]

  By the completeness lemma, {\sc Yes} instances of the CSP are mapped
  to graphs $G$ that are $3$-colorable. By the soundness lemma, {\sc
    No} instances of the CSP are mapped to graphs $G$ such that every
  $3$-coloring miscolors at least a fraction $\frac{(1/2-\eps)}{33/2}
  = \frac{1-2\eps}{33}$ of the total weight of edges. Since $\eps > 0$ is an arbitrary constant, the proof of Theorem~\ref{thm:3color} is complete.\footnote{Our reduction produced a graph with edge weights, but by Remark~\ref{rem:wt-unwt}, the same inapproximability factor holds for unweighted graphs as well.} 
\end{proof}
\subsection{\maxcolor{k} Hardness}
%
\begin{theorem}
\label{thm:kcolor-NP-hard}
  For every integer $k \ge 3$
  and every $\eps > 0$, it is NP-hard to
  approximate \maxcolor{k} within a factor of $1 - \frac{1}{33 (k + c_k)+c_k} +
  \eps$ where $c_k = k \mod 3 \le 2$.
\end{theorem}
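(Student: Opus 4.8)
The plan is to reduce \maxcolor{3} to \maxcolor{k} by a ``blow-up'' gadget construction in the spirit of Kann--Khanna--Lagergren--Panconesi \cite{KKLP}, but adapted to preserve perfect completeness (i.e. $k$-colorability of the produced graph in the {\sc Yes} case). Starting from a graph $H$ that is $3$-colorable in the {\sc Yes} case and every $3$-coloring of which miscolors at least a fraction $\frac{1}{33}-\eps$ of the edge weight in the {\sc No} case (Theorem~\ref{thm:3color}), I would build a graph $G$ as follows. Replace each vertex $v$ of $H$ by a cloud $C_v$ of $k$ vertices, and add within each cloud a complete graph $K_k$ (``cloud edges'') so that in any $k$-coloring every cloud must use all $k$ colors on its vertices, i.e. each cloud induces a bijection to the color set $[k]$. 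For each edge $(u,v)$ of $H$, connect the clouds $C_u$ and $C_v$ by a bipartite gadget designed so that: (a) if the two clouds are colored by the ``same'' bijection then many gadget edges are miscolored, while (b) if they are colored by bijections that agree only on a set $S\sse [k]$ then the number of miscolored gadget edges scales with $|S|$; the point is to encode the $3$-colors of $H$ inside the first $3$ coordinates of the cloud-bijections and force consistency. The arithmetic correction term $c_k = k \bmod 3$ arises because one wants to partition (or nearly partition) the $k$ colors into groups of $3$; when $3\nmid k$ there is a leftover of $c_k$ colors that must be handled by a slightly larger gadget, which is why the denominator is $33(k+c_k)+c_k$ rather than exactly $33k$.

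Concretely, I would fix the edge weights of the cloud-edges and the gadget-edges so that (i) in the {\sc Yes} case, taking the promised $3$-coloring $\chi_0$ of $H$ and lifting it to a $k$-coloring of $G$ in which cloud $C_v$ uses color $\chi_0(v)$ on a designated vertex and fills in the remaining $k-1$ colors arbitrarily but consistently, \emph{all} edges of $G$ are properly colored, so $G$ is $k$-colorable; and (ii) in the {\sc No} case, from any $k$-coloring $c$ of $G$ one can extract a $3$-coloring $\chi$ of $H$ (e.g.\ by looking at which of the first three color-classes the ``canonical'' vertex of each cloud falls into, using the cloud clique to make the extraction well-defined) such that the weight of edges of $G$ miscolored by $c$ is at least some fixed constant times the weight of edges of $H$ miscolored by $\chi$. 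Combining (ii) with the {\sc No}-case guarantee of Theorem~\ref{thm:3color} gives that every $k$-coloring of $G$ miscolors at least a fraction $\frac{1}{33(k+c_k)+c_k}-O(\eps)$ of the total weight, which is the claimed bound after rescaling $\eps$. The total-weight bookkeeping — summing cloud-clique weights, gadget weights, and balancing them against the edge weights inherited from $H$ — is the routine calculation that produces the exact constant in the denominator.

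The main obstacle, and the place where this reduction genuinely differs from \cite{KKLP}, is the soundness analysis: in the {\sc MaxCut}-based reduction of \cite{KKLP} one only needs the $2$-coloring / $k$-cut structure, but here we must ensure that the gadget between two clouds really does \emph{force} the induced colorings to be ``rotations of each other'' (i.e.\ related by a permutation of $[k]$) up to a controlled number of miscolored edges, and then argue that a small number of gadget-violations translates into a small number of $H$-violations. The delicate point is that a $k$-coloring of $G$ need not respect the cloud structure perfectly — a cloud clique might itself have a miscolored edge — so the extraction map $\chi$ must be robust, and one has to charge any cloud-internal miscoloring, any ``non-bijective'' behaviour of a cloud, and any gadget miscoloring consistently against the budget $\tau$. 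I would handle this by first arguing that if a cloud clique has $t$ miscolored internal edges then that cloud's contribution to the total violated weight is already $\ge t\cdot(\text{cloud weight})$, so we may assume (after moving to a nearby $k$-coloring that only decreases the number of miscolored edges, exactly as Lemma~\ref{lem:3col-soundness} does for the $3$-color gadget) that every cloud is properly $k$-colored and hence induces a genuine bijection to $[k]$; then the gadget analysis reduces to a clean combinatorial statement about two bijections, and the reduction to the {\sc No}-case bound of Theorem~\ref{thm:3color} goes through. A secondary technical nuisance is getting the $c_k$ term exactly right when $3\nmid k$: one must design the ``leftover'' part of the construction (the $c_k$ extra colors) so that it contributes exactly the right amount to both the completeness and the total weight, which is why the statement carries the slightly awkward $33(k+c_k)+c_k$ denominator rather than a cleaner expression.
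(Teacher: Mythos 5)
Your high-level plan (reduce from the \maxcolor{3} hardness of Theorem~\ref{thm:3color} via a Kann--Khanna--Lagergren--Panconesi-style blow-up, and account for $k\bmod 3$ by an extra correction) matches the paper's strategy, but the two places where the real work happens are either left unspecified or resolved by a step that does not hold.

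First, the inter-cloud gadget ``designed so that miscolored weight scales with the overlap of bijections'' is the entire content of the construction, and you do not say what it is. The paper's gadget is not an ad hoc bipartite graph but a concrete tensor-product: form $G' = K'_{k/3}\otimes G$ (so for each edge $(u,v)$ of $G$ one gets a complete bipartite graph between $\{(u,i)\}_{i\le k/3}$ and $\{(v,i')\}_{i'\le k/3}$), then take three copies indexed by $j\in\{1,2,3\}$ and add a weight-$\frac{2}{3}d_u$ complete graph among the $k$ copies of each $u$. The three copies are what encode the ``three translates'' needed to recover a $3$-coloring, and the specific weight $\frac{2}{3}d_u$ is what makes the bookkeeping close.

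Second, and more seriously, your soundness plan hinges on ``moving to a nearby $k$-coloring in which every cloud is properly colored'' so that each cloud induces a genuine bijection to $[k]$. That step is not justified: repairing a cloud clique can increase the number of miscolored gadget edges, and unlike Lemma~\ref{lem:3col-soundness} (where the global edge weights $\Delta(\cdot)/2$ are tuned so that a random re-coloring is in expectation non-losing) there is no analogous weight-balancing here that makes the repair free. In fact the paper explicitly flags this in Remark~\ref{rem:KKLP}: ``we cannot ensure that each $k$-clique is properly colored, so these translates might overlap and a more careful soundness analysis is needed.'' The paper's actual soundness proof does not assume bijectivity; it defines suggestion sets $\sugg^j_u$ from the (possibly repeating) colors in each copy, uses Cauchy--Schwarz to lower-bound the within-cloud violation $C^{within}_u$ by $\tfrac12|\overline{\sugg_u}|$, and runs a \emph{randomized} decoding (pick a uniformly random color $c\in[k]$, assign $\chi_G(u)$ by the first $j$ with $c\in\sugg^j_u$, else uniformly at random) whose expected miscoloring is bounded by $C^{total}/k$. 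Your deterministic ``canonical vertex'' decoding would not give such a clean charging. Finally, the $c_k$ correction in the paper is obtained by a separate, simpler reduction from \maxcolor{K} with $K=k-c_k$ (adding $c_k$ extra universal vertices with tuned weights), not by baking leftover colors into the main gadget; that is a minor difference but your version would again require its own soundness argument which you have not supplied.
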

\begin{proof}
  We will reduce \maxcolor{3} to \maxcolor{k} and then apply
  Theorem~\ref{thm:3color}. Throughout the proof, we will assume $k$
  is divisible by $3$. At the end, we will cover the remaining cases
  also. The reduction is inspired by the reduction from \mcut\ to
  \maxcut{k} given by Kann {\it et al.}~\cite{KKLP} (see
  Remark~\ref{rem:KKLP}). Some modifications to the reduction are
  needed when we reduce from \maxcolor{3}, and the analysis has to
  handle some new difficulties. The details of the reduction and its
  analysis follow.

  Let $G=(V,E)$ be an instance of \maxcolor{3}. By
  Theorem~\ref{thm:3color}, it is NP-hard to tell if $G$ is
  $3$-colorable or every $3$-colors miscolors a fraction $\frac{1}{33}
  -\eps$ of edges. We will construct a graph $H$ such that $H$ is
  $k$-colorable when $G$ is $3$-colorable, and a $k$-coloring which
  miscolors at most a fraction $\mu$ of the total weight of edges of
  $H$ implies a $3$-coloring of $G$ with at most a fraction $\mu k$ of
  miscolored edges. Combined with Theorem~\ref{thm:3color}, this gives
  us the claimed hardness of \maxcolor{k}.

Let $K'_{k/3}$ denote the complete graph with loops on $k/3$
vertices. Let $G'$ be the tensor product graph between $K_{k/3}$ and
$G$, $G'=K'_{k/3}\otimes G$ as defined by
Weichsel~\cite{weichsel62}. 
Identify each node in $G'$ with $(u,i),
u\in V(G), i\in \{1,2,\dots,k/3\}$. The edges of $G'$ are $((u,i),(v,i'))$ for $(u,v) \in E$ and any $i,i' \in \{1,\dots,k/3\}$. Next we make $3$ copies of $G'$, and identify
the nodes with $(u,i,j), (u,i)\in V(G'), j\in \{1,2,3\}$, then put edges
between all nodes of the form $(u,i,j)$ and $(u,i',j')$ if either
$i\neq i'$ or $j\neq j'$ with weight $\frac{2}{3} d_u$, where $d_u$ is
degree of node $u$. The total weight of edges in this new construction $H$ equals
\[ \sum_{u\in V} \left({k\choose 2} \frac{2}{3} d_u + \frac{3}{2}
  \left(\frac{k}{3}\right)^2 d_u \right) \le k^2 m \ . \]

\begin{lemma}
  If $G$ is $3$-colorable, then $H$ is $k$-colorable.
\end{lemma}
\begin{proof}
Let $\chi_G:V(G)\to \{1,2,3\}$ be a 3-coloring of $G$. Consider the
following coloring function for $H$, $\chi_H:V(H)\to \{1,2,\dots,k\}$. For node
$(u,i,j)$, let $\chi_H((u,i,j)) = \pi^{j}(\chi_G(u)) + 3 (i-1)$. Here
$\pi$ is the permutation $\left(\begin{array}{ccc} 1 & 2 & 3
  \\ 2&3&1 \end{array}\right)$, and $\pi^j(x) =
\underbrace{\pi(\ldots(\pi(}_{\text{j times}} x)))$. Equivalently
$\pi(x) = x\mod 3 + 1$.

Consider edges of the form $\{(u,i,j),(v,i',j)\}$. If $i\neq i'$, then
colors of the endpoints are different. Else we have
$\chi((u,i,j))-\chi((v,i,j)) \equiv \chi(u)-\chi(v) \not\equiv 0 \mod
3$. For edges of the form $\{(u,i,j),(u,i',j')\}$, if $i\neq i'$,
clearly edge is satisfied. When $i=i',j\neq j'$, $\chi((u,i,j))-\chi((u,i,j'))
\equiv \pi^{j}(u) - \pi^{j'}(u) \equiv j-j' \not\equiv 0 \mod 3$.
\end{proof}

\begin{lemma}
  If $H$ has a $k$-coloring that properly colors a set of edges with
  at least a fraction $(1-\mu)$ of the total weight, then $G$
  has a $3$-coloring which colors at least a fraction $(1-\mu k)$ of
  its edges properly.
\end{lemma}
\begin{proof}
Let $\chi_H$ be the coloring of $H$, $\sugg^j_u = \{\chi_H((u,i,j)) \mid 1 \le i\le k/3\}$ and $\sugg_u = \bigcup_j \sugg^j_u$. Denote the total weight of uncut
edges in this solution as 
\begin{equation}
C^{total} = \sum_{u \in V(G)} \frac{2}{3} d_u C^{within}_u +
C^{between},  
\end{equation}
where $C^{within}_u$ and $C^{between}$ denotes the number of
improperly colored edges within the copies of node $u$ and between
copies of different vertices $u,v \in V(G)$ respectively. We have the
following relations:
\begin{equation}
\label{eq:c-between}
  \begin{array}{rclr}
  C^{between} &=& \sum_{j=1}^3 \sum_{uv \in E(G)} \sum_{1\le i\le
    i'\le k/3} 1_{\chi_H((u,i,j))=\chi_H((v,i',j))} \\ &\ge&
  \sum_{j=1}^3 \sum_{uv \in E(G)} |\sugg^j_u \cap \sugg^j_v|
  \end{array}
\end{equation}

\begin{equation}
  \begin{array}{rclr}
    C^{within}_u &=& \sum_{c \in \sugg_u} {|\chi_H^{-1}(c) \cap B_u|
      \choose 2} & \mbox{($B_u=\{(u,i,j)|\forall i,j\}$)} \\ &=&
    \sum_{c \in \sugg_u} \frac{|B_{u,c}|^2}{2} -
    \frac{k}{2} & \mbox{($B_{u,c} = B_u \cap \chi_H^{-1}(c)$)}\\
    &\ge& \frac{1}{2 |\sugg_u|} \left(\sum_{c \in \sugg_u}
    |B_{u,c}|\right)^2  - \frac{k}{2} & \mbox{(Cauchy-Schwarz)}\\
    &\ge& \frac{k}{2}\left(\frac{k}{|\sugg_u|}-1\right) 
      \ge \frac{k}{2} \frac{|\overline{\sugg_u}|}{|\sugg_u|}
      \ge \frac{|\overline{\sugg_u}|}{2}
  \end{array}
\end{equation}

%
%
%
%
Now we will find a (random) $3$-coloring $\chi_G$ for $G$.  Pick $c$ from $\{1,2,\dots,k\}$
uniformly at random.  If $c \notin \sugg_u$, select $\chi_G(u)$
uniformly at random from $\{1,2,3\}$. If $c \in \sugg_u$, set
$\chi_G(u) = j$ if $j$ is the smallest index for which $c \in
\sugg^j(u)$.  With this coloring $\chi_G(u)$, the probability that an
edge $(u,v) \in E(G)$ will be improperly colored is:
\begin{eqnarray*}
\prob{}{\chi_G(u) = \chi_G(v)} 
& \le & \sum_{j=1}^3 \prob{c}{c \in \sugg^j_u \cap \sugg^j_v} + \frac{1}{3} \prob{c}{c \in \overline{\sugg_u},c\in \sugg_v} \\
& & \quad + ~~
\frac{1}{3} \prob{c}{c \in \sugg_u,c\in \overline{\sugg_v}}
+ \frac{1}{3} \prob{c}{c \in \overline{\sugg_u},c\in \overline{\sugg_v}}\\
&\le& \sum_{j=1}^3 \frac{|\sugg^j_u\cap \sugg^j_v|}{k} + 
\frac{|\overline{\sugg_u}|}{3k} + \frac{|\overline{\sugg_v}|}{3k} 
\end{eqnarray*}
%
\noindent We can thus bound the expected number of
miscolored edges in the coloring $\chi_G$ as follows.
\begin{eqnarray*}
\mathbb{E} \Biggl[ \sum_{(u,v) \in E(G)} 1_{\chi_G(u)=\chi_G(v)}\Biggr] &\le& \sum_{uv \in E} \Biggl[  \biggl( \sum_{j=1}^3 \frac{|\sugg^j_u\cap
      \sugg^j_v|}{k} \biggr)+ \frac{|\overline{\sugg_u}|}{3k} +
    \frac{|\overline{\sugg_v}|}{3k} \Biggr] \\ 
&\le& \frac{1}{k} \Bigl( C^{between} +
\sum_{u\in V(G)} \frac{d_u}{3} |\overline{\sugg_u}|\Bigr) \quad \mbox{(using (\ref{eq:c-between}))} \\
&\le& \frac{1}{k} \Bigl(C^{between} + 
      \sum_{u\in V(G)} \frac{2 d_u}{3}C^{within}_u  \Bigr) = \frac{C^{total}}{k}
\end{eqnarray*}
This implies that there exists a $3$-coloring of $G$ for which the
number of improperly colored edges in $G$ is at most
$\frac{C^{total}}{k}$. Therefore if $H$ has a $k$-coloring which
improperly colors at most a total weight $\mu k^2 m$ of edges, then
there is a $3$-coloring of $G$ which colors improperly at most a fraction
$\frac{\mu k^2 m}{k m} = \mu k$ of its edges.
\end{proof}

This completes the proof of Theorem~\ref{thm:kcolor-NP-hard} when $k$ is divisible by $3$. The other cases are easily handled by adding $k \mod 3$ extra nodes connected to all vertices by edges of suitable weight. See Appendix~\ref{app:k-mod-3} for details. 
\end{proof}
\begin{remark}[Comparison to \cite{KKLP}]
\label{rem:KKLP}
The reduction of Kann {\it et al}~\cite{KKLP} converts
  an instance $G$ of \mcut\ to the instance $G' = K'_{k/2} \otimes G$
  of \maxcut{k}. Edge weights are picked so that the optimal $k$-cut
  of $G'$ will give a set $S_u$ of $k/2$ different colors to all
  vertices in each $k/2$ clique $(u,i)$, $1 \le i \le k/2$. This
  enables converting a $k$-cut of $G'$ into a cut of $G$ based on
  whether a random color falls in $S_u$ or not. In the $3$-coloring
  case, we make $3$ copies of $G'$ in an attempt to enforce three ``translates''
  of $S_u$, and use those to define a $3$-coloring from a
  $k$-coloring. But we cannot ensure that each $k$-clique is properly
  colored, so these translates might overlap and a more careful
  soundness analysis is needed.
\end{remark}
\section{Conditional Hardness Results for \maxcolor{k}}
We will first review the (exact) \dtoone{2} Conjecture, and then
construct a noise operator, which allows us to preserve
$k$-colorability. Then we will bound the stability of coloring
functions with respect to this noise operator. In the last section, we
will give a PCP verifier which concludes the hardness result.
\vspace{-1ex}
\subsection{Preliminaries}
We begin by reviewing some definitions
and \dtoone{d} conjecture.
\begin{definition}
  An instance of a bipartite Label Cover problem represented as
  $\mathcal{L} = (U, V, E, W, R_U, R_V, \Pi)$ consists of a weighted
  bipartite graph over node sets $U$ and $V$ with edges $e=(u,v) \in
  E$ of non-negative real weight $w_e \in W$. $R_U$ and $R_V$ are
  integers with $1\le R_U\le R_V$. $\Pi$ is a collection of projection
  functions for each edge:
  $\Pi=\{\pi_{vu}:\intset{R_V}\rightarrow\intset{R_U}\big|u\in U, v\in
  V\}$. A labeling $\ell$ is a mapping $\ell:U\rightarrow
  \intset{R_U}$, $\ell:V\rightarrow \intset{R_V}$. An edge $e=(u,v)$ is
  satisfied by labeling $\ell$ if $\pi_e(\ell(v))=\ell(u)$. We define the
  value of a labeling as sum of weights of edges satisfied by this
  labeling normalized by the total weight. $\opt(\mathcal{L})$ is the
  maximum value over any labeling.
\end{definition}
\begin{definition}
  A projection $\pi:\intset{R_V}\rightarrow \intset{R_U}$ is called
  \dtoone{d} if for each $i \in \intset{R_U}$, $|\pi^{-1}(i)|\le
  d$. It is called {\em exactly} \dtoone{d} if $|\pi^{-1}(i)|= d$ for
  each $i \in \{1,2,\dots,R_U\}$.
\end{definition}
\begin{definition}
  A bipartite Label-Cover instance $\mathcal{L}$ is called $d$-to-$1$
  Label-Cover if all projection functions, $\pi \in \Pi$ are
  $d$-to-$1$.
\end{definition}
\begin{conjecture}[\dtoone{d} Conjecture \cite{Khot02}]
  For any $\gamma > 0$, there exists a \dtoone{d} Label-Cover instance
  $\mathcal{L}$ with $R_V=R(\gamma)$ and $R_U \le d R_V$ many labels
  such that it is NP-hard to decide between two cases,
  $\opt(\mathcal{L}) = 1$ or $\opt(\mathcal{L})\ge \gamma$.  Note that
  although the original conjecture involves \dtoone{d} projection
  functions, we will assume that it also holds for \emph{exactly}
  \dtoone{d} functions (so $R_U = d R_V$), which is the case in
  \cite{DMR}.
\label{conj:dtoone}
\end{conjecture}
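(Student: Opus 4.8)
This statement is the \dtoone{d} Conjecture of Khot, i.e.\ an \emph{assumption}, not a theorem, and the present paper invokes it as a hypothesis rather than establishing it; so strictly speaking there is no proof to reconstruct. What I can describe is the route a proof would have to take and why it is presently out of reach. A proof would be a foundational advance in the theory of probabilistically checkable proofs — at least as significant as resolving the Unique Games Conjecture, and in a precise sense harder, because of the demand for \emph{perfect completeness}: the hard instances must be separated from instances with $\opt(\mathcal{L}) = 1$, not merely $\opt(\mathcal{L}) \ge 1-\eps$.

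The only plausible plan follows the standard template for manufacturing hard Label-Cover instances. First I would start from an NP-hard gap CSP with perfect completeness (e.g.\ \textsc{Gap-3-SAT}, or the instance underlying the PCP theorem) and amplify the soundness gap by a product construction — parallel repetition or a derandomized variant — to obtain a $2$-prover $1$-round game with perfect completeness and soundness $\gamma = \gamma(\text{number of repetitions})$ whose constraints are \emph{projections} $\pi_{vu}:\intset{R_V}\to\intset{R_U}$. The second and crucial step would be an alphabet-reduction or gadget that reshapes this instance so that every projection becomes (exactly) \dtoone{d}, i.e.\ $|\pi^{-1}(i)| = d$ for every $i \in \intset{R_U}$, while (i) not destroying perfect completeness and (ii) not inflating the soundness back towards $1$. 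Condition (ii) is exactly where every known approach collapses: generic parallel repetition yields projection games but gives no control over the fibre sizes $|\pi^{-1}(i)|$, and the obvious ways of padding short fibres to length $d$ introduce spurious labels that a cheating prover can exploit, pushing the soundness up. Controlling the interaction between the combinatorial ``padding'' and the analytic soundness bound is the main obstacle, and I have no technique for it.

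I should record the partial progress that does exist. The \emph{imperfect-completeness} relatives of this conjecture — the $2$-to-$2$ games statement and the near-perfect-completeness $2$-to-$1$ statement — have since been established (Khot, Minzer, and Safra, with follow-up work) through a delicate analysis of expansion in Grassmann-type graphs and of their ``non-expanding'' sets; these already suffice to make a number of Unique-Games-hardness consequences unconditional. But that line of work loses perfect completeness — the completeness parameter degrades to $1-\eps$ — so it does \emph{not} give the conjecture as stated here, and in particular does not yield perfect-completeness hardness for problems such as graph colouring (which is exactly why \cite{DMR} and the present paper still need the hypothesis). Closing that last gap, namely obtaining soundness $\gamma$ against completeness \emph{exactly} $1$ with \dtoone{d} constraints, remains open to the best of my knowledge; accordingly, in the rest of this paper the statement is used only as an assumption.
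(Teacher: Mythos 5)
You are correct: this is the \dtoone{d} Conjecture of Khot, stated as an assumption (with the mild strengthening to \emph{exactly} \dtoone{d} projections as in \cite{DMR}); the paper does not prove it and only invokes it as a hypothesis for the conditional hardness results in Section 3. Your identification of it as an open conjecture rather than a theorem matches the paper's treatment, and there is no proof in the paper to compare against.
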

Using the reductions from \cite{DMR}, it is possible to show that the
above conjecture still holds given that the graph $(U\cup V, E)$ is
left-regular and unweighted, i.e., $w_e = 1$ for all $e\in E$.
\subsection{Noise  Operators}
For a positive integer $M$, we will denote by $[M]$ the set
$\{0,1,\dots,M-1\}$.  We will identify elements of $[M^2]$ with $[M]
\times [M]$ in the obvious way, with the pair $(a,b) \in [M]^2$
corresponding $a + M b \in [M^2]$. 
\begin{definition}
  \label{def:markovop}
  A Markov operator $T$ is a linear operator which maps probability
  measures to other probability measures. In a finite discrete
  setting, it is defined by a stochastic matrix whose $(x,y)$'th entry
  $T(x \rightarrow y)$ is the probability of transitioning from $x$ to
  $y$. Such an operator is called symmetric if $T(x \rightarrow y) =
  T(y \rightarrow x) = T(\xiffy{x}{y})$.
\end{definition}
\begin{definition}
\label{def:beckner}
  Given $\rho\in [-1,1]$, the Beckner noise operator, $T_{\rho}$ on
  $[q]$ is defined by as $T_\rho(x\rightarrow x) = \frac{1}{q} +
  \left(1-\frac{1}{q}\right) \rho$ and $T_\rho(x\rightarrow y) =
  \frac{1}{q}(1-\rho)$ for any $x\neq y$. 
\end{definition}
\begin{observation}
  All eigenvalues of the operator $T_{\rho}$ are given by $1 =
  \lambda_0(T_\rho)\ge \lambda_1(T_\rho) = \ldots =
  \lambda_{q-1}(T_\rho) = \rho$. Any orthonormal basis
  $\alpha_0,\alpha_1,\ldots,\alpha_{q-1}$ with $\alpha_0$ being
  constant vector, is also a basis for $T_{\rho}$.
\end{observation}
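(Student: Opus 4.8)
The plan is to write $T_\rho$ explicitly as a $q\times q$ matrix and recognize it as a rank-one perturbation of a multiple of the identity. Concretely, from Definition~\ref{def:beckner} one checks the identity
\[ T_\rho \;=\; \rho\, I \;+\; \frac{1-\rho}{q}\, J, \]
where $I$ is the $q\times q$ identity and $J$ is the all-ones matrix: the diagonal entries of the right-hand side are $\rho + \frac{1-\rho}{q} = \frac1q + \bigl(1-\frac1q\bigr)\rho$, and the off-diagonal entries are $\frac{1-\rho}{q}$, exactly matching $T_\rho(x\to x)$ and $T_\rho(x\to y)$.

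Next I would record the (trivial) spectral decomposition of $J$. Let $\alpha_0$ be the normalized constant vector $\frac{1}{\sqrt q}(1,1,\dots,1)$. Every row of $J$ equals $\sqrt q\,\alpha_0^\top$, so $J = q\,\alpha_0\alpha_0^\top$; hence $J\alpha_0 = q\,\alpha_0$ and $Jv = 0$ for every $v$ orthogonal to $\alpha_0$. Consequently, for any orthonormal basis $\alpha_0,\alpha_1,\dots,\alpha_{q-1}$ extending $\alpha_0$, we have $J\alpha_0 = q\,\alpha_0$ and $J\alpha_i = 0$ for $1\le i\le q-1$.

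Combining this with the matrix identity: $T_\rho\alpha_0 = \rho\,\alpha_0 + \frac{1-\rho}{q}\cdot q\,\alpha_0 = (\rho + 1 - \rho)\,\alpha_0 = \alpha_0$, and for $1\le i\le q-1$, $T_\rho\alpha_i = \rho\,\alpha_i + 0 = \rho\,\alpha_i$. Thus any such orthonormal basis is an eigenbasis of $T_\rho$, with eigenvalue $1$ on $\alpha_0$ and eigenvalue $\rho$ on the remaining $q-1$ basis vectors; since $\rho\in[-1,1]$ we have $1\ge\rho$, yielding the claimed list $1 = \lambda_0(T_\rho) \ge \lambda_1(T_\rho) = \dots = \lambda_{q-1}(T_\rho) = \rho$.

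There is essentially no real obstacle here: the proof reduces to the matrix identity $T_\rho = \rho I + \frac{1-\rho}{q}J$ and the elementary observation $J = q\,\alpha_0\alpha_0^\top$, both immediate from the definitions. If one prefers to avoid the rank-one picture entirely, the same conclusion follows by noting that each row of $T_\rho$ sums to $1$ (so the constant vector is fixed), while on the mean-zero subspace the uniform $\frac1q$-term contributes nothing and $T_\rho$ acts as scalar multiplication by $\rho$.
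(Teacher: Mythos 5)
Your proof is correct, and it is the standard argument one would give for this unproved Observation: the identity $T_\rho = \rho I + \frac{1-\rho}{q}J$ follows directly from Definition~\ref{def:beckner}, and the spectral decomposition of $J = q\,\alpha_0\alpha_0^\top$ immediately yields eigenvalue $1$ on $\alpha_0$ and eigenvalue $\rho$ on the orthogonal complement. The paper states this as an Observation without proof, and your argument is precisely the routine verification the authors left implicit.
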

\begin{lemma}
  \label{lemma:T}
For an integer $q \ge 6$, there exists a symmetric Markov operator $T$ on $[q]^2$ whose diagonal entries are all $0$ and with 
 eigenvalues  $1=\lambda_0\ge
  \lambda_1\ge \ldots\ge \lambda_{q^2-1}$ such that the {\em spectral radius}
  $\rho(T) = \max\{|\lambda_1|,|\lambda_{q^2-1}|\}$ is at most
  $\frac{4}{q-1}$.
\end{lemma}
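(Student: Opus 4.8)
The plan is to build $T$ on $[q]^2$ as a combination of two pieces that each handle one coordinate, so that the eigenvalues are controlled by $1\times 1$ and $2\times 2$ blocks. First I would take the standard character basis of $\mathbb{Z}_q$ (or just the eigenbasis of the Beckner operator $T_\rho$ on $[q]$), so that $\mathbb{R}^{[q]^2}$ decomposes as $V\otimes V$ where $V=\mathbf{1}\oplus V_0$, $\mathbf{1}$ the constant direction and $V_0$ the $(q-1)$-dimensional orthogonal complement. A natural candidate is a "swap-and-mix" operator: with some probability move mass using a random shift in the first coordinate, with the remaining probability do the same in the second coordinate, while also permuting which coordinate is ``active''. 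Concretely I would try $T = \tfrac12\bigl(T_\alpha\otimes J + J\otimes T_\alpha\bigr)$-type constructions, where $J$ is the all-constant (rank one) stochastic matrix $J(x\to y)=1/q$ and $T_\alpha$ is Beckner on $[q]$; the point of tensoring one factor with $J$ is to kill the diagonal. Indeed $(T_\alpha\otimes J)(x_1,x_2)\to(y_1,y_2)$ has ``diagonal'' entry $\tfrac1q T_\alpha(x_1\to x_1) = \tfrac1q(\tfrac1q+(1-\tfrac1q)\alpha)$, which is nonzero, so a single such term is not enough; I would instead add a third, purely off-diagonal term or choose $\alpha$ negative to cancel the diagonal. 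The cleanest route: pick $\alpha=-\tfrac{1}{q-1}$ so that $T_\alpha(x\to x)=0$; then $T_\alpha$ itself has zero diagonal and spectral radius $\tfrac{1}{q-1}$, and $T_\alpha\otimes J$ has zero diagonal with eigenvalues in $\{0,1,-\tfrac{1}{q-1}\}$.

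With that choice the eigenvalues are immediate from the tensor structure: $T_{-1/(q-1)}$ has eigenvalue $1$ on $\mathbf 1$ and $-\tfrac{1}{q-1}$ on $V_0$; $J$ has eigenvalue $1$ on $\mathbf 1$ and $0$ on $V_0$. So $T_\alpha\otimes J$ has eigenvalue $1$ on $\mathbf 1\otimes\mathbf 1$, $0$ on everything touching $V_0$ in the second factor, and $-\tfrac1{q-1}$ on $V_0\otimes\mathbf 1$; symmetrically for $J\otimes T_\alpha$. Averaging the two, the off-$\mathbf{1}\otimes\mathbf 1$ eigenvalues lie in $\{0,\pm\tfrac{1}{2(q-1)}\}$, giving spectral radius $\tfrac{1}{2(q-1)}<\tfrac{4}{q-1}$, and the diagonal of the average is still $0$ since each term's diagonal is $0$. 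The operator is symmetric because $T_\alpha$ and $J$ are symmetric and the tensor/sum of symmetric operators is symmetric, and it is a valid Markov operator (nonnegative, rows sum to $1$) because $T_{-1/(q-1)}$ has all entries $\tfrac{1}{q-1}(1-\alpha)=\tfrac{1}{q-1}\cdot\tfrac{q}{q-1}\ge 0$ off-diagonal and the convex combination of stochastic matrices is stochastic.

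The one thing to be careful about — and the main obstacle if it bites — is whether this simple tensor construction is exactly the operator the later sections need, or only its spectral-radius bound; since the lemma only asserts existence of \emph{some} symmetric zero-diagonal $T$ with $\rho(T)\le\tfrac4{q-1}$, the construction above suffices as stated, and the $q\ge 6$ hypothesis is not even needed for this bound (it is presumably there so that later the colorability/noise-operator reduction has enough room, or so that a more specific $T$ with additional structure exists). If the intended $T$ must additionally be, say, the transition matrix of a $\xiffy{2}{1}$-type product structure compatible with the $2$-to-$1$ conjecture, then I would instead realize $[q]^2$ via the pairing $(a,b)\leftrightarrow a+qb$ described just before the lemma and take $T$ to be the natural operator that, from $(a,b)$, outputs a uniformly random pair differing from $(a,b)$ in a prescribed way; bounding its spectral radius then reduces to the same Fourier/tensor computation, estimating a $2\times 2$ block coming from the interaction of the two coordinates, and the $q\ge 6$ condition enters to make that block's eigenvalues at most $\tfrac4{q-1}$ in absolute value. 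I would present the tensor construction as the proof, verify symmetry, stochasticity, zero diagonal, and the eigenvalue list line by line, and remark that the crude bound $\tfrac{1}{2(q-1)}\le\tfrac{4}{q-1}$ leaves ample slack.
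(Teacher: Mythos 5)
Your tensor construction $T=\tfrac12\bigl(T_{-1/(q-1)}\otimes J + J\otimes T_{-1/(q-1)}\bigr)$ is a correct and much simpler proof of the lemma \emph{as literally stated}: it is symmetric, stochastic, has zero diagonal, and the eigenvalue computation gives spectral radius $\tfrac{1}{2(q-1)}\le\tfrac{4}{q-1}$ (the nonunit eigenvalues are actually only $0$ and $-\tfrac{1}{2(q-1)}$, not $\pm$, but that is harmless). However, the hazard you flag at the end does bite, because the lemma statement is weaker than what the paper actually needs from $T$. The paper's $T$ is constructed so that $T(\xiffy{x}{y})>0$ only when the unordered pairs $\{x_1,x_2\}$ and $\{y_1,y_2\}$ are \emph{disjoint}; ``zero diagonal'' only rules out $y=x$ as an ordered pair, which is a far weaker condition. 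The disjointness property is exactly what the completeness argument for the PCP verifier (Lemma~\ref{lem:kcol-compl}) relies on: the verifier accepts an honest Long-Code proof only because $T(\xiffy{(x_{2i-1},x_{2i})}{(y_{2i-1},y_{2i})})>0$ forces $\{x_{2i-1},x_{2i}\}\cap\{y_{2i-1},y_{2i}\}=\emptyset$, hence the decoded colors at $v$ and $v'$ differ. Your $T$ violates this --- e.g.\ $(T_\alpha\otimes J)\bigl((x_1,x_2)\to(y_1,x_2)\bigr)$ is positive for any $y_1\ne x_1$ --- so with your operator the verifier could read the same color from both tables and wrongly reject a correct proof; perfect completeness, the whole point of avoiding UGC, would be lost.

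The paper instead writes down the symmetric operator supported exactly on disjoint pairs, with transition weight $\alpha=\tfrac{1}{(q-1)(q-3)}$ between two genuine pairs and $\beta=\tfrac{1}{(q-1)(q-2)}$ when one side is a singleton $(x,x)$, then bounds $\rho(T)$ by passing to $T^2$ (which has all entries strictly positive) and applying the variational characterization $1-\lambda_1(T^2)\ge\min_{x,y}q^2T^2(\xiffy{x}{y})$; this is purely combinatorial, needs no Fourier/tensor decomposition, and is where $q\ge 6$ is actually used. Your last paragraph gestures at a ``differ in a prescribed way'' operator, which is the right direction, but to close the gap you would have to (a) make the prescribed way precisely ``$\{y_1,y_2\}$ disjoint from $\{x_1,x_2\}$,'' (b) check symmetry, stochasticity, and zero diagonal for that choice, and (c) actually bound its spectral radius, none of which the clean tensor spectrum gives you.
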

\begin{proof}
  Consider the symmetric Markov operator $T$ on $[q]^2$ such that, for
  $x=(x_1,x_2), y=(y_1,y_2)\in [q]^2$,
  \[T(\xiffy{x}{y}) =
  \begin{cases}
    \alpha &\mbox{if $\{x_1,x_2\}\cap\{y_1,y_2\}=\emptyset$
      and $x_1\neq x_2, y_1\neq y_2$,}\\
    \beta &\mbox{if $x_1\not\in\{y_1,y_2\}$
      and $x_1= x_2, y_1\neq y_2$,}\\
    \beta &\mbox{if $y_1\not\in\{x_1,x_2\}$
      and $x_1\neq x_2, y_1= y_2$,}\\
    0 & \mbox{else,}
  \end{cases}\] where $\alpha = \frac{1}{(q-1)(q-3)}$ and 
  $\beta = \frac{1}{(q-1)(q-2)}$. It is clear that $T$ is symmetric and
  doubly stochastic.

  To bound the spectral radius of $T$, we will bound the second
  largest eigenvalue $\lambda_1(T^2)$ of $T^2$. Notice that $T^2$ is
  also a symmetric Markov operator. Moreover $\lambda_i(T^2) =
  \lambda^2_i(T)$, therefore $\lambda_1(T^2) \ge
  \max(\lambda^2_1(T),\lambda^2_{q^2-1}(T)) \ge \rho(T)^2$.

  Notice that $T^2(\xiffy{x}{y})>0$ for all pairs $x,y\in[q]^2$. 
  Consider the variational
  characterization of $1 - \lambda_1(T^2)$ \cite{Sinclair92}:
  \[\min_\psi \frac{\sum_{x,y}
    (\psi(x) - \psi(y))^2 \pi(x) T^2(\xiffy{x}{y})}{\sum_{x,y}
    (\psi(x)-\psi(y))^2 \pi(x)\pi(y)} \ge \min_\psi \min_{x,y}
  \frac{\pi(x) (\psi(x) - \psi(y))^2
    T^2(\xiffy{x}{y})}{(\psi(x)-\psi(y))^2 \pi(x)\pi(y)} = \min_{x,y}
  q^2 T^2(\xiffy{x}{y})\]


  For any two pairs $(x_1,x_2),(y_1,y_2)\in[q]^2$, let $l =
  |[q]\setminus \{x_1,x_2,y_1,y_2\}|$. Then we have
  \begin{eqnarray*}
    T^2(\xiffy{(x_1,x_2)}{(y_1,y_2)}) &=& 
  \begin{cases} l (l-1)\beta^2\ge (q-2)(q-3)\beta^2 & 
    \mbox{if $x_1=x_2$ and $y_1=y_2$,}\\
    l (l-1) \alpha \beta\ge (q-3)(q-4)\alpha\beta & 
    \mbox{if $x_1\neq x_2$ and $y_1=y_2$,}\\
    l (l-1) \alpha \beta\ge (q-3)(q-4)\alpha\beta & 
    \mbox{if $x_1=x_2$ and $y_1\neq y_2$,}\\
    l (l-1) \alpha^2 + l \beta^2 \ge (q-4)(q-5)\alpha^2 + (q-4)\beta^2 &
    \mbox{if $x_1\neq x_2$ and $y_1\neq y_2$.}
  \end{cases}\\
  &\ge& \frac{(q-5) (q-4)}{(q-3)^2 (q-2) (q-1)}
  \end{eqnarray*}
  So $\rho(T)\le \sqrt{\lambda_1(T^2)} \le \sqrt{1-\frac{(q-5) (q-4)
      q^2}{(q-3)^2 (q-2) (q-1)}} \le \frac{3}{q} + 
      \frac{8}{q^2}\le\frac{4}{q-1}$ for $q\ge 6$.
\end{proof}
\subsection{$q$-ary Functions, Influences, Noise stability}
We define inner product on space of functions from $[q]^N$ to
$\R$ as $\langle f,g \rangle = \expct{x\sim[q]^N}{f(x)
  g(x)}$. Here $x \sim \mathcal{D}$ denotes sampling from distribution
$\mathcal{D}$ and $\mathcal{D} = [q]^N$ denotes the uniform
distribution on $[q]^N$.

Given a symmetric Markov operator $T$ and
$x=(x_1,\ldots,x_N)\in[q]^N$, let $T^{\otimes N} x$ denote the product
distribution on $[q]^N$ whose \texsup{$i$}{th} entry $y_i$ is
distributed according to $T(\xiffy{x_i}{y_i})$. Therefore $T^{\otimes
  N} f (x) = \expct{y \sim T^{\otimes N} x} { f(y)}$.

\begin{definition}
  Let $\alpha_0,\alpha_1,\ldots,\alpha_{q-1}$ be an orthonormal basis
  of $\R^q$ such that $\alpha_0$ is all constant vector. For
  $x\in[q]^N$, we define $\alpha_x \in \R^{q^N}$ as
  \[ \alpha_x = \alpha_{x_1}\otimes \ldots \otimes \alpha_{x_N}.\]
\end{definition}

\begin{definition}[Fourier coefficients]
  For a function $f:[q]^N\to\R$, define $\hat{f}(\alpha_x) =
  \langle f, \alpha_x \rangle$.
\end{definition}
\begin{definition}
  Let $f:[q]^N\rightarrow \R$ be a function. The
  \emph{influence} of \texsup{i}{th} variable on $f$, $\infl_i(f)$ is
  defined by
  \[ \infl_i(f) = 
      \expct{}
            {\var{}{f(x)|x_1,\ldots,x_{i-1},x_{i+1},\ldots,x_N}}\]
  where $x_1,\ldots,x_N$ are uniformly distributed.
  Equivalently, $\infl_i (f) = \sum_{x:x_i\neq 0} \hat{f}^2(\alpha_x)$.
\end{definition}
\begin{definition}
  Let $f:[q]^N\rightarrow \R$ be a function. The
  \emph{low-level influence} of \texsup{i}{th} variable of $f$ is
  defined by
  \[ \infl_i^{\le t}(f) = \sum_{x:x_i\neq 0,\ |x|\le t} \hat{f}^2(\alpha_x).\]
\end{definition}
\begin{observation}
  For any function $f$, $\sum_i \infl_i^{\le t}(f) = \sum_{x:|x|\le t}
  \hat{f}^2(\alpha_x) |x|\le t \sum_x \hat{f}^2(\alpha_x) = t
  \|f\|^2_2$. If $f:[q]^N\to[0,1]$, then $\|f\|^2_2 \le 1$, so $\sum_i
  \infl_i^{\le t}(f) \le t$.
\end{observation}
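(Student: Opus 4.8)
The plan is to unfold the definition of low-level influence and then swap the order of summation. First I would write
$\sum_{i=1}^N \infl_i^{\le t}(f) = \sum_{i=1}^N \sum_{x:\, x_i\neq 0,\ |x|\le t} \hat{f}^2(\alpha_x)$, and then exchange the two sums so that the outer sum ranges over all $x$ with $|x|\le t$ and the inner sum counts the indices $i$ with $x_i\neq 0$. Since that count is exactly the weight $|x|$ (the number of nonzero coordinates of $x$, which is precisely the quantity constrained by ``$|x|\le t$'' in the definition of $\infl_i^{\le t}$), this rewrites the left-hand side as $\sum_{x:\,|x|\le t} |x|\,\hat{f}^2(\alpha_x)$.

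Next I would use the constraint $|x|\le t$ appearing in the sum to replace each factor $|x|$ by $t$, giving $\sum_i \infl_i^{\le t}(f) \le t\sum_{x:\,|x|\le t}\hat{f}^2(\alpha_x) \le t\sum_x \hat{f}^2(\alpha_x)$, where the second inequality just drops the restriction $|x|\le t$ and uses nonnegativity of the squared Fourier coefficients. The final equality $\sum_x \hat{f}^2(\alpha_x) = \|f\|_2^2$ is Parseval's identity: since $\alpha_0,\ldots,\alpha_{q-1}$ is an orthonormal basis of $\R^q$, the tensor products $\{\alpha_x\}_{x\in[q]^N}$ form an orthonormal basis of $\R^{q^N}$ with respect to the inner product $\langle f,g\rangle = \expct{x\sim[q]^N}{f(x)g(x)}$, so expanding $f$ in this basis and computing $\langle f,f\rangle$ yields exactly $\sum_x \hat{f}^2(\alpha_x)$. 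This establishes $\sum_i \infl_i^{\le t}(f) \le t\,\|f\|_2^2$.

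Finally, for the special case $f:[q]^N\to[0,1]$, I would observe that $f(x)^2\le 1$ pointwise, hence $\|f\|_2^2 = \expct{x}{f(x)^2}\le 1$, and the claimed bound $\sum_i \infl_i^{\le t}(f)\le t$ follows by plugging this into the previous inequality.

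There is no genuine obstacle here; it is a short Fourier-analytic computation. The only points worth a word of justification are that tensor products of an orthonormal basis again form an orthonormal basis (so that Parseval applies in $\R^{q^N}$), and that the ``weight'' $|x|$ used in the low-level influence is the same quantity that appears when one sums the per-coordinate influences, which is what makes the swap of summation produce the factor $|x|$.
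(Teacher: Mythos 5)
Your proof is correct and matches the computation the paper embeds directly in the observation's statement: swap the order of summation to produce the factor $|x|$, bound $|x|\le t$, drop the level restriction, and apply Parseval, followed by $\|f\|_2^2\le 1$ for $[0,1]$-valued $f$. Nothing is missing.
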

\begin{definition}[Noise stability]
  Let $f$ be a function from $[q]^N$ to $\R$, and let $-1\le
  \rho\le 1$. Define the \emph{noise stability} of $f$ at $\rho$ as
  \[ \mathbb{S}_{\rho}(f) = \langle f, T^{\otimes n}_\rho f\rangle =
  \sum_x \rho^{|x|} \hat{f}^2_i(\alpha_x)\] where $T_{\rho}$ is the
  Beckner operator as in Definition~\ref{def:beckner}.
\end{definition}
A natural way to think about a $q$-coloring function is as a
collection of $q$-indicator variables summing to $1$ at every point.
To make this formal:
\begin{definition}
  Define the unit $q$-simplex as $\Delta_q = \{(x_1,\ldots,x_{q}) \in
  \R^q \mid \sum x_i = 1, x_i \ge 0\}$.
\end{definition}
\begin{observation}\label{obs:infbndsimplex}
  For positive integers $Q,q$ and any function
  $f=(f_1,\ldots,f_q):[Q]^N\to \Delta_q$, $\sum_i \infl^{\le t}_i(f) =
  \sum_i \sum_j \infl^{\le t}_i (f_j) \le t \sum_j \|f_j\|^2 \le t$.
\end{observation}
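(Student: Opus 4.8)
The plan is to reduce to the scalar case using linearity of low-level influence in the coordinate functions, and then to invoke the pointwise simplex constraint to control the sum of squared norms.

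First I would unwind the definitions. The low-level influence of a vector-valued function is defined coordinatewise, so $\sum_i \infl^{\le t}_i(f) = \sum_i \sum_j \infl^{\le t}_i(f_j)$, and I would swap the order of summation to write this as $\sum_j \bigl(\sum_i \infl^{\le t}_i(f_j)\bigr)$. Next I would apply the preceding observation to each scalar function $f_j:[Q]^N\to\R$: since $\sum_i \infl^{\le t}_i(f_j) = \sum_{x:|x|\le t}\hat{f_j}^2(\alpha_x)\,|x| \le t\sum_x \hat{f_j}^2(\alpha_x) = t\|f_j\|_2^2$, summing over $j$ gives $\sum_i \infl^{\le t}_i(f)\le t\sum_j \|f_j\|_2^2$.

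The one place the hypothesis $f(x)\in\Delta_q$ is actually used is the final bound $\sum_j \|f_j\|_2^2 = \expct{x\sim[Q]^N}{\sum_j f_j(x)^2} \le 1$: for each fixed $x$ the vector $(f_1(x),\ldots,f_q(x))$ has nonnegative entries summing to $1$, hence $\sum_j f_j(x)^2 \le \bigl(\sum_j f_j(x)\bigr)^2 = 1$, and taking expectations over $x$ preserves this. Chaining the three steps yields $\sum_i \infl^{\le t}_i(f)\le t$.

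There is essentially no obstacle here; the proof is a two-line computation. The only subtlety worth flagging is that one should sum the squared $L^2$ norms rather than naively bounding each $\|f_j\|_2^2\le 1$ (which would only give $\le qt$): it is precisely the simplex relation $\sum_j f_j(x)^2\le 1$, rather than the per-coordinate bound $f_j(x)^2\le 1$, that makes the total come out to $t$.
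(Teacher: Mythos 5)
Your proof is correct and matches the paper's (implicit) reasoning: the paper states the observation as a chain of inequalities without further comment, and your argument — summing the preceding scalar observation over coordinates, then using the pointwise simplex constraint to get $\sum_j f_j(x)^2 \le \bigl(\sum_j f_j(x)\bigr)^2 = 1$ — is exactly what justifies that chain. The remark that the per-coordinate bound $\|f_j\|_2^2\le 1$ would only give $qt$ correctly identifies why the simplex hypothesis, rather than mere boundedness, is what is needed.
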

We want to prove a lower bound on the stability of $q$-ary functions
with noise operators $T$. The following proposition is generalization
of Proposition 11.4 in \cite{KKMO} to general symmetric Markov
operators $T$ with small spectral radii. Its proof appears in Appendix~\ref{app:noise-stab}.
\begin{proposition}
\label{prop:stab}
For integers $Q,q \ge 3$, and a symmetric Markov operator $T$ on $[Q]$
with spectral radius $\rho(T)\le \frac{c}{q-1}$, for some $c>0$, there
is a small enough $\delta=\delta(q)>0$ and $t=t(q)>0$ such that for
any function $f=(f_1,\ldots,f_q):[Q]^N\rightarrow \Delta_q$ with
$\infl_i^{\le t} (f)\le \delta$, for all $i$, satisfies
  \[ \sum_{j=1}^q \langle f_j, T^{\otimes N} f_j\rangle \ge 1/q - 2 c \ln q/q^2 - C
  \ln\ln q /q^2\] for some universal constant $C<\infty$.
\end{proposition}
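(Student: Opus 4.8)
The plan is to mimic the proof of Proposition 11.4 in \cite{KKMO}, which bounds the noise stability of low-influence simplex-valued functions, but to replace the Beckner operator by an arbitrary symmetric Markov operator $T$ of small spectral radius. The key structural fact is that since $T$ is symmetric, it is diagonalized by an orthonormal basis $\beta_0 = \mathbf{1}/\sqrt Q, \beta_1, \dots, \beta_{Q-1}$ of $\R^Q$ with real eigenvalues $\mu_0 = 1, \mu_1, \dots, \mu_{Q-1}$, all of absolute value $\le \rho(T) \le c/(q-1)$ except $\mu_0$. Expanding each coordinate function $f_j$ in the corresponding product basis $\beta_x = \beta_{x_1}\otimes\cdots\otimes\beta_{x_N}$, we get $\langle f_j, T^{\otimes N} f_j\rangle = \sum_x \bigl(\prod_{i: x_i\ne 0}\mu_{x_i}\bigr)\widehat{f_j}(\beta_x)^2$. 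First I would split this into the ``constant level'' term $x = 0$, which contributes exactly $\langle f_j,\mathbf{1}\rangle^2 = (\expct{}{f_j})^2$, and all higher-level terms. Summing the constant part over $j$ and using $\sum_j \expct{}{f_j} = 1$ together with Cauchy--Schwarz gives $\sum_j (\expct{}{f_j})^2 \ge 1/q$; so the whole point is to show the remaining (signed) higher-level contribution is at least $-2c\ln q/q^2 - C\ln\ln q/q^2$.

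The next step is the standard truncation/invariance argument. Fix a degree cutoff $d = d(q)$ (to be chosen $\approx \log q$ up to constants). The levels $1 \le |x| \le d$ contribute at least $-\rho(T)\cdot\sum_{1\le|x|\le d}\widehat{f_j}(\beta_x)^2 \ge -\rho(T)\cdot(\text{stuff})$ in the worst case when eigenvalues alternate sign; more precisely, each such term has magnitude at most $\rho(T)^{|x|} \le \rho(T)$, so the total low-degree contribution summed over $j$ is at least $-\rho(T)\cdot q \cdot \max_j\|f_j\|^2$-type bound --- but this is too lossy. Instead, following \cite{KKMO}, I would use the hypothesis $\infl_i^{\le t}(f)\le\delta$ together with the invariance principle (in the form used in \cite{MOO,KKMO}) to pass to Gaussian space: low-influence simplex-valued functions behave like their Gaussian analogues, and there the stability under a noise operator of correlation parameter $\rho$ is governed by the Gaussian quadrant/simplex bounds. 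The upshot is that for simplex-valued $f$, $\sum_j \langle f_j, T^{\otimes N}_\rho f_j\rangle$ is, up to an error $o_\delta(1)$ controlled by the low-level influence bound $\delta$ and the truncation level $t$, at least the corresponding Gaussian quantity, which for a single real parameter $\rho = \rho(T)$ is bounded below by $1/q - 2|\rho|\ln q/q^2 - O(\ln\ln q/q^2)$ by the calculation in \cite{KKMO} (this is exactly their bound specialized to noise rate $\rho$, giving $2c\ln q/q^2$ after substituting $\rho \le c/(q-1)$). High-degree terms $|x| > d$ are killed because $|\prod \mu_{x_i}| \le \rho(T)^d \le (c/(q-1))^d$, which for $d = \Theta(\log q)$ is polynomially small in $1/q$, hence absorbable into the $C\ln\ln q/q^2$ slack; this is also where $t = t(q)$ and $\delta = \delta(q)$ get chosen, large resp.\ small enough as a function of $q$ and $d$, exactly as in \cite{KKMO}.

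The main obstacle is handling the fact that $T$ is an \emph{arbitrary} symmetric operator rather than the Beckner operator: its eigenvalues can have \emph{both signs} and varying magnitudes, so one cannot directly quote a monotone-in-$\rho$ stability bound. The fix is the observation that only the spectral \emph{radius} matters once one sets up the invariance principle carefully: replace $T$ by its ``worst case'' --- an operator whose nonconstant eigenvalues are all $\pm\rho(T)$ --- and note that in the Gaussian limit the relevant functional is dominated (level by level, since the bad contribution at level $\ell$ is at most $\rho(T)^\ell$ in absolute value and the dominant bad level is $\ell=1$) by that of the Beckner operator $T_{\rho(T)}$. Concretely, I expect to prove a lemma stating $\sum_j\langle f_j, T^{\otimes N}f_j\rangle \ge \sum_j\langle g_j, T_{\rho(T)}^{\otimes N'} g_j\rangle - \text{err}(\delta,t)$ for suitable Gaussian (or discretized) $g_j$ on the simplex, and then invoke the $T_\rho$-stability bound of \cite{KKMO} verbatim. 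The bookkeeping of the invariance-principle error terms, and making sure the cross terms between different eigenspaces (those with eigenvalue of either sign) are correctly signed in the lower bound, is the delicate part; everything else is a routine adaptation of \cite{KKMO,MOO}. Hence the detailed argument is deferred to Appendix~\ref{app:noise-stab}.
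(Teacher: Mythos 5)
Your overall strategy matches the paper's: diagonalize $T$, observe that the eigenvalue products $\prod_{a\ne 0}\lambda_a^{|x|_a}$ are bounded in absolute value by $\rho(T)^{|x|}$, and thereby compare to the Beckner operator so that the Majority-is-Stablest machinery of \cite{KKMO,MOO} applies. That is exactly the route the paper takes in Appendix~\ref{app:noise-stab}.

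However, the middle of your proposal contains a framing that, taken literally, would derail the argument. You propose to extract the constant level to get $\sum_j\mu_j^2\ge 1/q$ and then to show separately that the residual higher-level contribution is $\ge -2c\ln q/q^2-O(\ln\ln q/q^2)$. That decomposition does not hold in general: if the $\mu_j$ are imbalanced (say one $\mu_j\approx 1/2$), the higher-level contribution for that single $f_j$ can be as negative as $-\rho(T)\,\|f_j\|_2^2=-\Theta(1/q)$, which is far worse than $-O(\ln q/q^2)$, even though at the same time $\sum_j\mu_j^2$ becomes much larger than $1/q$. The two parts have to be traded off against each other. The paper resolves this by the inequality
\[
\langle f_j, T^{\otimes N}f_j\rangle \;\ge\; 2\mu_j^2 \;-\; \mathbb{S}_{\rho}(f_j)\;-\;\text{(small tail)},
\]
where the factor $2$ arises because the constant-level term $\mu_j^2$ appears once as itself and once inside $\mathbb{S}_\rho(f_j)$; then MIS gives $\mathbb{S}_\rho(\tilde f_j)\le\Lambda_\rho(\mu_j)+\eps$, and the final step shows $\sum_j\bigl[2\mu_j^2-\Lambda_{c/(q-1)}(\mu_j)\bigr]^+\ge 1/q-2c\ln q/q^2-O(\ln\ln q/q^2)$ via a convexity argument minimized at $\mu_j\equiv 1/q$. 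Your proposal never writes down the $2\mu_j^2-\mathbb{S}_\rho$ form or the convexity step, and the phrase ``invoke the $T_\rho$-stability bound of \cite{KKMO} verbatim'' also glosses over a real point: KKMO's Proposition 11.4 is proved for the fixed parameter $\rho=-1/(q-1)$, whereas here one needs the analogous Gaussian calculation for a general $\rho\le c/(q-1)$; the paper redoes this with the explicit function $F(\mu_i)=\mu_i^2+\frac{c}{q-1}2\mu_i^2\ln(1/\mu_i)(1+C\ln\ln q/\ln q)$. So the high-level plan is right, but the specific algebraic lower bound and the final convexity/optimization step are missing, and without them the ``separate constant from higher levels'' framing is incorrect as stated.
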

\begin{definition}[Moving between domains]
  For any $x=(x_1,\ldots,x_{2 N}) \in [q]^{2 N}$, denote $\overline{x}
  \in [q^2]^N$ as 
\[ \overline{x} = ((x_1, x_2),\ldots,(x_{2 N-1},x_{2 N})) \ . \]
  Similarly for $y=(y_1,\ldots,y_{N}) \in [q^2]^{N}$, denote
  $\underline{y} \in [q]^{2 N}$ as
  \[ \underline{y} = (y_{1,1}, y_{1,2},\ldots,y_{N,1},y_{N,2}),\]
  where $y_i = y_{i,1} + y_{i,2} q$ such that $y_{i,1},y_{i,2}\in
  [q]$.  For a function $f$ on $[q]^{2 N}$, define $\overline{f}$ on
  $[q^2]^N$ as $\overline{f}(y) = f(\underline{y})$. 
\end{definition}
\noindent 
The relationship between influences of variables for functions $f$ and
$\overline{f}$ are given by the following claim (Claim 2.7 in
\cite{DMR}).
\begin{claim}\label{claim:infrel}
  For any function $f:[q]^{2N}\rightarrow \R$,
  $i\in\intset{N}$ and any $t\ge 1$,
  $\infl_i^{\le t}(\overline{f}) \le \infl_{2i-1}^{\le 2t}(f) +
  \infl_{2i}^{\le 2t} (f)$.
\end{claim}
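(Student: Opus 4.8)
The statement is purely about Fourier weight, so the strategy is to set up a correspondence between the Fourier expansions of $f$ on $[q]^{2N}$ and of $\overline{f}$ on $[q^2]^N$, track how the degree $|\cdot|$ changes under pairing coordinates, and finish with a union bound. First I would fix a consistent basis: let $\alpha_0,\ldots,\alpha_{q-1}$ be the orthonormal basis of $\R^q$ with $\alpha_0$ constant, and choose the orthonormal basis $\{\beta_0,\ldots,\beta_{q^2-1}\}$ of $\R^{q^2}$ so that, under the identification $[q^2]\cong[q]\times[q]$ with $a+qb\leftrightarrow(a,b)$, one has $\beta_{a+qb}=\alpha_a\otimes\alpha_b$. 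This is legitimate since $\{\alpha_a\otimes\alpha_b\}_{a,b}$ is orthonormal and $\beta_0=\alpha_0\otimes\alpha_0$ is the constant vector on $[q^2]$. Using $\beta_y=\beta_{y_1}\otimes\cdots\otimes\beta_{y_N}$ for $y\in[q^2]^N$ as the product basis for $\overline{f}$, and recalling $\overline{f}(y)=f(\underline{y})$ together with the fact that $y\mapsto\underline{y}$ is a measure-preserving bijection $[q^2]^N\to[q]^{2N}$, I get the identity $\widehat{\overline{f}}(\beta_y)=\langle\overline{f},\beta_y\rangle=\langle f,\alpha_{\underline{y}}\rangle=\hat{f}(\alpha_{\underline{y}})$, where $\underline{y}=(y_{1,1},y_{1,2},\ldots,y_{N,1},y_{N,2})$ as in the definition.

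\textbf{Degree relation.} The key numerical observation is that for $y\in[q^2]^N$ we have $|y|\le|\underline{y}|\le 2|y|$. Indeed $y_i=0$ iff $(y_{i,1},y_{i,2})=(0,0)$ iff both coordinates $2i-1,2i$ of $\underline{y}$ vanish, and if $y_i\neq 0$ then exactly one or two of those coordinates are nonzero. Moreover $y_i\neq 0$ iff $(\underline{y})_{2i-1}\neq 0$ or $(\underline{y})_{2i}\neq 0$.

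\textbf{Assembling the bound.} Now
\[
\infl_i^{\le t}(\overline{f})=\sum_{\substack{y\in[q^2]^N\\ y_i\neq 0,\ |y|\le t}}\widehat{\overline{f}}^2(\beta_y)=\sum_{\substack{y:\,y_i\neq 0\\ |y|\le t}}\hat{f}^2(\alpha_{\underline{y}}).
\]
For each $y$ in this sum, $x:=\underline{y}$ satisfies $x_{2i-1}\neq 0$ or $x_{2i}\neq 0$, and $|x|\le 2|y|\le 2t$; since $y\mapsto\underline{y}$ is injective, distinct $y$'s give distinct $x$'s, so
\[
\infl_i^{\le t}(\overline{f})\ \le\ \sum_{\substack{x\in[q]^{2N}:\ |x|\le 2t\\ x_{2i-1}\neq 0\ \text{or}\ x_{2i}\neq 0}}\hat{f}^2(\alpha_x)\ \le\ \sum_{\substack{x:\,x_{2i-1}\neq 0\\ |x|\le 2t}}\hat{f}^2(\alpha_x)+\sum_{\substack{x:\,x_{2i}\neq 0\\ |x|\le 2t}}\hat{f}^2(\alpha_x)=\infl_{2i-1}^{\le 2t}(f)+\infl_{2i}^{\le 2t}(f),
\]
where the middle inequality is the union bound $\{x_{2i-1}\neq 0\ \text{or}\ x_{2i}\neq 0\}\subseteq\{x_{2i-1}\neq 0\}\cup\{x_{2i}\neq 0\}$ applied to nonnegative terms. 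This is exactly the claimed inequality.

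\textbf{Expected main obstacle.} There is no serious obstacle; the only thing requiring care is the bookkeeping in the first step — choosing the basis of $\R^{q^2}$ compatibly with the product structure so that the Fourier coefficient identity $\widehat{\overline{f}}(\beta_y)=\hat f(\alpha_{\underline y})$ holds on the nose, and then correctly reading off the degree inequality $|\underline y|\le 2|y|$ from the definition of $\underline{\cdot}$. Once those are in place the argument is a one-line union bound.
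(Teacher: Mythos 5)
Your proof is correct. Note that the paper does not give its own proof of this claim --- it is cited as Claim~2.7 of Dinur--Mossel--Regev \cite{DMR} --- so there is nothing in the paper to compare against; the route you take (choose the tensor-product basis $\beta_{a+qb}=\alpha_a\otimes\alpha_b$ on $\R^{q^2}$ so that $\widehat{\overline{f}}(\beta_y)=\hat{f}(\alpha_{\underline{y}})$, observe that $|\underline{y}|\le 2|y|$ and that $y_i\neq 0$ iff coordinate $2i-1$ or $2i$ of $\underline{y}$ is nonzero, and finish with a union bound over nonnegative Fourier weights) is the natural and correct one. The only point worth spelling out, since the paper's definition of $\infl_i^{\le t}$ is phrased via a fixed but arbitrary orthonormal basis with $\alpha_0$ constant, is that low-level influence is basis-independent (each level is the squared norm of the projection onto a canonical subspace), so you are free to compute $\infl_i^{\le t}(\overline{f})$ in the product basis $\{\alpha_a\otimes\alpha_b\}$ as you do.
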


%
%
%
%
%
%
%
%

\vspace{-1ex}
\subsection{PCP Verifier for \maxcolor{k}}
This verifier uses ideas similar to the \maxcut{k} verifier given in
\cite{KKMO} and the $4$-coloring hardness reduction in \cite{DMR}. Let $\mathcal{L} = (U,V,E,R,2 R,\Pi)$ be a \dtoone{2}
bipartite, unweighted and left regular Label-Cover instance as in
Conjecture \ref{conj:dtoone}. Assume the proof is given as the Long
Code over $[k]^{2 R}$ of the label of every vertex $v \in V$. Below
for a permutation $\sigma$ on $\intset{n}$ and a vector $x \in \R^n$, $x\circ \sigma$ denotes $(x_{\sigma(1)},x_{\sigma(2)},\cdots,x_{\sigma(n)})$. For a function $f$ on $\R^n$, $f \circ \sigma$ is defined as $f \circ \sigma(x) = f(x \circ \sigma)$.



\begin{itemize}
\vspace{-1ex}
\itemsep=0ex
\item Pick $u$ uniformly at random from $U$, $u\sim U$.
\item Pick $v,v'$ uniformly at random from $u$'s neighbors. Let $\pi,
  \pi'$ be the associated projection functions, $\chi_v, \chi_{v'}$ be
  the (supposed) Long Codes for the labels of $v, v'$ respectively.
\item Let $T$ be the Markov operator on $[k]^2$ given in Lemma
  \ref{lemma:T}. Pick $x \sim [k^2]^R$ and $y \sim T^{\otimes R}
  x$. Let $\sigma_v,\sigma_{v'}$ be two permutations of $\intset{2 R}$ such
  that $\pi(\sigma^{-1}_v(2i-1)) = \pi(\sigma^{-1}_v(2i)) =
  \pi'(\sigma_{v'}^{-1}(2i-1)) = \pi'(\sigma_{v'}^{-1}(2i))$ (both $\pi$ and
  $\pi'$ are exactly \dtoone{2}, so such permutations exist).
\item Accept iff $\chi_v\ \circ\ \sigma_v (\underline{x}) $ and
  $\chi_{v'} \ \circ\ \sigma_{v'}(\underline{y})$ are different.
\end{itemize}
The proofs of the following two lemmas appear in Appendix~\ref{app:kcol-PCP}.
\begin{lemma}[Completeness]
\label{lem:kcol-compl}
  If the original \dtoone{2} Label-Cover instance $\mathcal{L}$ has a
  labeling which satisfies all constraints, then there is a proof
  which makes the above verifier always accept.
\end{lemma}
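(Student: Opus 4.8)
\textbf{Proof proposal for Lemma~\ref{lem:kcol-compl}.}
The plan is to exhibit the honest proof and check that the verifier accepts it for \emph{every} choice of its random coins. Let $\ell$ be a labeling of $\mathcal{L}$ satisfying every edge, so that $\pi(\ell(v))=\ell(u)$ and $\pi'(\ell(v'))=\ell(u)$ whenever $v,v'$ are neighbors of $u$ (with associated projections $\pi,\pi'$). Take the proof to be, for each $v\in V$, the honest Long Code $\chi_v:[k]^{2R}\to[k]$ of $\ell(v)$, i.e.\ the dictator $\chi_v(z)=z_{\ell(v)}$. We must show that on this proof the acceptance predicate $\chi_v\circ\sigma_v(\underline{x})\neq\chi_{v'}\circ\sigma_{v'}(\underline{y})$ holds with probability $1$.

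Fix a run of the verifier: $u,v,v'$, projections $\pi,\pi'$, strings $x\sim[k^2]^R$ and $y\sim T^{\otimes R}x$, and permutations $\sigma_v,\sigma_{v'}$ of $\intset{2R}$ with the stated property. Put $a=\ell(u)\in\intset{R}$; then $\ell(v)\in\pi^{-1}(a)$ and $\ell(v')\in(\pi')^{-1}(a)$. The defining property of $\sigma_v$ says that for each $i\in\intset{R}$ the two-element set $\{\sigma_v^{-1}(2i-1),\sigma_v^{-1}(2i)\}$ lies inside a single fiber of $\pi$; since $\pi$ is exactly $2$-to-$1$ this set \emph{equals} a fiber of $\pi$, and as $i$ ranges over $\intset{R}$ these fibers are distinct, hence enumerate all of them. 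So there is a unique block index $i^\star$ with $\{\sigma_v^{-1}(2i^\star-1),\sigma_v^{-1}(2i^\star)\}=\pi^{-1}(a)$, and the cross-condition $\pi(\sigma_v^{-1}(2i-1))=\pi'(\sigma_{v'}^{-1}(2i-1))$ forces $\{\sigma_{v'}^{-1}(2i^\star-1),\sigma_{v'}^{-1}(2i^\star)\}=(\pi')^{-1}(a)$ for the \emph{same} $i^\star$. Consequently both $\sigma_v(\ell(v))$ and $\sigma_{v'}(\ell(v'))$ lie in the single block $\{2i^\star-1,2i^\star\}$.

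Unwinding definitions, $\chi_v\circ\sigma_v(\underline{x})=(\underline{x}\circ\sigma_v)_{\ell(v)}=\underline{x}_{\sigma_v(\ell(v))}$, which by the previous paragraph and the definition of $\underline{x}$ is one of the two coordinates $x_{i^\star,1},x_{i^\star,2}$ of $x_{i^\star}\in[k]^2$; likewise $\chi_{v'}\circ\sigma_{v'}(\underline{y})$ is one of $y_{i^\star,1},y_{i^\star,2}$. Since $y_{i^\star}$ is drawn from $T(\xiffy{x_{i^\star}}{\cdot})$, with probability $1$ the pair $(x_{i^\star},y_{i^\star})$ lies in the support of $T$, i.e.\ in one of the three positive cases of Lemma~\ref{lemma:T}. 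In each of those cases one checks directly that every coordinate of $x_{i^\star}$ differs from every coordinate of $y_{i^\star}$ (in the first case because $\{x_{i^\star,1},x_{i^\star,2}\}$ and $\{y_{i^\star,1},y_{i^\star,2}\}$ are disjoint; in the other two because one of the two pairs is constant and its value is excluded from the other pair). Hence $\chi_v\circ\sigma_v(\underline{x})\neq\chi_{v'}\circ\sigma_{v'}(\underline{y})$ and the verifier accepts. As this holds for every fixing of $u,v,v',x,y$, the verifier accepts with probability $1$.

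The only delicate point is the bookkeeping in the second paragraph that aligns the fiber $\pi^{-1}(\ell(u))$ and the fiber $(\pi')^{-1}(\ell(u))$ into the \emph{same} coordinate block of the folded domain $[k^2]^R$, so that the two queried symbols are ``halves'' of one pair $(x_{i^\star},y_{i^\star})$; combined with the observation that the support of $T$ is not merely off-diagonal but in fact never lets a coordinate of $x$ equal a coordinate of $y$, this is exactly what forces the two queried symbols to differ. Everything else is routine unwinding of definitions.
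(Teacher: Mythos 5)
Your proof is correct and follows essentially the same route as the paper's: use the honest Long Code proof, argue that the defining property of $\sigma_v,\sigma_{v'}$ places $\sigma_v(\ell(v))$ and $\sigma_{v'}(\ell(v'))$ in the same block $\{2i^\star-1,2i^\star\}$, and then invoke the fact that the support of $T$ consists only of pairs $(x_{i^\star},y_{i^\star})$ whose coordinate sets are disjoint. You spell out the block-alignment bookkeeping a bit more explicitly than the paper does, but this is extra detail, not a different argument.
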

\begin{lemma}[Soundness]
\label{lem:kcol-soundness}
  There is a constant $C$ such that, if the above verifier passes with
  probability exceeding $1-1/k+O(\ln k/k^2)$, then there
  is a labeling of $\mathcal{L}$ which satisfies $\gamma'=\gamma'(k)$
  fraction of the constraints independent of label set size $R$.
\end{lemma}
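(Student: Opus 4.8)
\textbf{Plan for the soundness lemma (Lemma~\ref{lem:kcol-soundness}).}
The plan is to run the standard ``influence decoding'' argument in the $d$-to-$1$ setting, but with the Beckner operator replaced by the zero-diagonal operator $T$ from Lemma~\ref{lemma:T}, so that perfect completeness is preserved while the acceptance probability is controlled by the noise stability bound of Proposition~\ref{prop:stab}. First I would set up the arithmetization: for each $v\in V$ write the supposed Long Code $\chi_v$ as a $k$-coloring function, i.e.\ as a map into $\Delta_k$ via its $k$ indicator coordinates $f_{v,1},\dots,f_{v,k}$. The probability that the verifier accepts on a fixed choice of $u,v,v'$ is the probability that two correlated evaluations land on different colors, so $1$ minus the acceptance probability equals $\sum_{j}\langle g_j, T^{\otimes R} g_j\rangle$ for suitable functions obtained from $\chi_v\circ\sigma_v$ and $\chi_{v'}\circ\sigma_{v'}$ after folding the $2R$ coordinates into $R$ pairs (this is exactly where the ``moving between domains'' machinery, $\overline{f}$ and $\underline{y}$, and Claim~\ref{claim:infrel} enter). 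Averaging over $u$ and its two neighbors $v,v'$, I would define $F_u = \expct{v\sim u}{\overline{\chi_v\circ\sigma_v}}$ (the average of the pulled-back, paired-up Long Codes at $u$'s side), which is again a function into $\Delta_k$ on $[k^2]^R$, and observe that the expected value of $\sum_j\langle (F_u)_j, T^{\otimes R}(F_u)_j\rangle$ is at most the ``uncut weight'' $1-(\text{acceptance prob})$ by convexity (Cauchy--Schwarz/Jensen in the choice of $v,v'$).

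Next I would apply the contrapositive of Proposition~\ref{prop:stab} with $Q=k^2$ and $q=k$: since $T$ has spectral radius $\rho(T)\le 4/(k-1)$ by Lemma~\ref{lemma:T}, if for \emph{every} $u$ all the low-degree influences $\infl_i^{\le t}(F_u)$ were at most $\delta=\delta(k)$, then $\sum_j\langle (F_u)_j, T^{\otimes R}(F_u)_j\rangle \ge 1/k - 8\ln k/k^2 - C\ln\ln k/k^2$, contradicting the assumption that the verifier accepts with probability exceeding $1-1/k+O(\ln k/k^2)$ (one chooses the $O(\cdot)$ constant to beat $8$; this is the step that fixes the hidden constant in the statement). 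Hence there is a non-negligible fraction of $u\in U$ — say at least some $\eta=\eta(k)>0$ — with a coordinate $i=i(u)$ having $\infl_i^{\le t}(F_u)\ge\delta$. Then I would push this influence back to the $V$ side: by Claim~\ref{claim:infrel} and the definition of $F_u$ as an average over neighbors $v$, a $u$ with a large low-degree influence forces a noticeable fraction of its neighbors $v$ to have a variable $2i-1$ or $2i$ with low-degree influence $\ge \delta'=\delta'(k)$ in $\chi_v\circ\sigma_v$, i.e.\ (undoing $\sigma_v$) in $\chi_v$ at some coordinate lying in the fiber $\pi^{-1}(\text{something near } i)$.

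Finally I would define the randomized labeling in the usual way: for each $v\in V$, let $\mathrm{Cand}(v)$ be the set of coordinates with $\infl^{\le t}\ge\delta'$ in $\chi_v$ (a set of size $O(t/\delta')$ by the total-influence bound, Observation~\ref{obs:infbndsimplex}), pick a label for $v$ uniformly from $\mathrm{Cand}(v)$, and for each $u\in U$ pick uniformly from the corresponding small candidate set on the $U$ side (the set of $i$ with $\infl_i^{\le t}(F_u)\ge\delta$, whose size is $O(t/\delta)$); one checks that for an $\eta$ fraction of edges $(u,v)$ the projection $\pi$ maps a candidate of $v$ to a candidate of $u$, so the edge is satisfied with probability at least $\Omega(\delta\delta'/t^2)$, giving $\opt(\mathcal{L})\ge \gamma' := \Omega(\eta(k)\,\delta(k)\delta'(k)/t(k)^2)>0$, a constant depending only on $k$ and independent of $R$, as required. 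The main obstacle I anticipate is the bookkeeping around the two permutations $\sigma_v,\sigma_{v'}$ and the $d$-to-$1$ (here $2$-to-$1$) projections: one must verify that the folded functions $\overline{\chi_v\circ\sigma_v}$ really do live on $[k^2]^R$ with the pairing aligned so that $T^{\otimes R}$ acts coordinatewise, that averaging over the two \emph{independent} neighbors $v,v'$ (rather than a single long-code test) still yields a single stability expression to which Proposition~\ref{prop:stab} applies, and that the influence transfer from $F_u$ back to individual $\chi_v$ survives the averaging with only a $\mathrm{poly}(\delta,\eta)$ loss — all of this is implicit in the \cite{DMR} and \cite{KKMO} arguments but needs to be checked against the zero-diagonal operator $T$ used here.
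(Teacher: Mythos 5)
Your proposal matches the paper's proof essentially step for step: arithmetize the acceptance probability, use independence of $v,v'$ given $u$ to collapse the two-sided test into a single stability expression $\sum_j \langle g^u_j, T^{\otimes R} g^u_j\rangle$ for the averaged function $g^u$ (your $F_u$), invoke Proposition~\ref{prop:stab} in the contrapositive to find high low-degree influences for a noticeable fraction of $u$, transfer those influences to the $V$-side via Claim~\ref{claim:infrel}, and finish with the standard small-candidate-set decoding. The only small imprecision is attributing the passage from $\expct{v,v'}{\langle\cdot,\cdot\rangle}$ to $\langle \expct{v}{\cdot}, T^{\otimes R}\expct{v'}{\cdot}\rangle$ to ``Cauchy--Schwarz/Jensen''; in the paper it is an exact equality by bilinearity and the conditional independence of $v$ and $v'$ given $u$, though the inequality direction you state is harmless for the argument.
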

Note that our PCP verifier makes ``$k$-coloring'' tests. By the standard conversion from PCP verifiers to CSP hardness, and Remark~\ref{rem:wt-unwt} about conversion to unweighted graphs with the same inapproximability factor, we conclude the main result of this section by combining Lemmas \ref{lem:kcol-compl} and \ref{lem:kcol-soundness}.
\begin{theorem}
  For any constant $k \ge 3$, assuming \dtoone{2} Conjecture, it is
  NP-hard to approximate \maxcolor{k} within a factor of $1-1/k+O(\ln
  k/k^2)$. 
\end{theorem}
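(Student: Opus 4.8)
The plan is to prove the final theorem by assembling the PCP verifier construction together with the two lemmas (Lemmas~\ref{lem:kcol-compl} and~\ref{lem:kcol-soundness}) into a standard CSP-hardness statement, and then translating that CSP into a weighted \maxcolor{k} instance, finally invoking Remark~\ref{rem:wt-unwt} to pass to unweighted graphs. The core of the argument is already encapsulated in the verifier and its two lemmas; what remains is the routine but careful packaging. First I would set $\gamma > 0$ to be a sufficiently small constant (to be chosen below), and invoke the \dtoone{2} Conjecture to obtain a \dtoone{2} Label-Cover instance $\mathcal{L}$ with $\opt(\mathcal{L}) = 1$ in the {\sc Yes} case and $\opt(\mathcal{L}) \le \gamma$ in the {\sc No} case, where (using the remark right after Conjecture~\ref{conj:dtoone} and the paragraph after it) we may take $\mathcal{L}$ to be left-regular, unweighted, and with exactly-\dtoone{2} projections, so that the permutations $\sigma_v,\sigma_{v'}$ required by the verifier exist.

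Next I would build the \maxcolor{k} instance $H$ whose vertices are the positions of the Long Code tables $\chi_v \in \R^{[k]^{2R}}$ for $v \in V$, with one vertex per pair (table, bit-string). Each run of the PCP verifier, which reads two bits $\chi_v \circ \sigma_v(\underline{x})$ and $\chi_{v'} \circ \sigma_{v'}(\underline{y})$ and accepts iff they differ, is encoded as an edge of $H$ between the corresponding two vertices, with weight equal to the probability that the verifier picks that pair of queries; requiring the two endpoints to get different colors is exactly the verifier's acceptance predicate, so a $k$-coloring of $H$ corresponds to a collection of proof tables and its fraction of properly colored weight equals the verifier's acceptance probability on that proof. (To keep things honest about edges that coincide with self-loops --- the case $\underline{x},\underline{y}$ landing on the same position of the same table --- one checks that Lemma~\ref{lemma:T} gives $T$ zero diagonal and that $\pi = \pi'$ is possible, so the probability that the two queried positions literally coincide is a lower-order term that can be absorbed into the $O(\ln k/k^2)$ slack; alternatively one discards such runs, changing the value by at most that amount.) The completeness direction: by Lemma~\ref{lem:kcol-compl}, in the {\sc Yes} case there is a proof, hence a $k$-coloring of $H$, that is accepted with probability $1$, i.e.\ $H$ is $k$-colorable. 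The soundness direction: by Lemma~\ref{lem:kcol-soundness}, if some $k$-coloring of $H$ properly colors more than a $1 - 1/k + O(\ln k/k^2)$ fraction of the weight, then there is a labeling of $\mathcal{L}$ satisfying a fraction $\gamma'(k)$ of its constraints, a quantity depending only on $k$ and not on $R$. Now I would choose the constant $\gamma$ in the conjecture to be strictly smaller than $\gamma'(k)$; then in the {\sc No} case no such $k$-coloring can exist, so every $k$-coloring of $H$ miscolors at least an $\Omega(\ln k/k^2)$ (actually, at least the complement of $1 - 1/k + O(\ln k/k^2)$) fraction of the weight. Deciding which case we are in would decide the Label-Cover instance, which is NP-hard; hence approximating \maxcolor{k} within $1 - 1/k + O(\ln k/k^2)$ is NP-hard for the weighted instance $H$.

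Finally, I would invoke Remark~\ref{rem:wt-unwt}: since the edge weights of $H$ are probabilities arising from the verifier's finite random string and are therefore rational with polynomially bounded common denominator (the verifier uses $O(R \log k)$ random bits), the reduction of Crescenzi--Silvestri--Trevisan converts $H$ into an unweighted graph while preserving $k$-colorability and the inapproximability factor, completing the proof. The main obstacle --- really the only non-routine point --- is the bookkeeping around degenerate verifier runs (coinciding queried positions, or the two tables $v = v'$), which must be shown to contribute only at the $O(\ln k/k^2)$ scale so as not to spoil either the perfect completeness or the claimed soundness gap; everything else is the standard verifier-to-CSP-to-graph dictionary, with the quantitative content already supplied by Lemma~\ref{lemma:T}, Proposition~\ref{prop:stab}, and Lemmas~\ref{lem:kcol-compl}--\ref{lem:kcol-soundness}.
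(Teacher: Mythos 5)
Your proposal follows essentially the same route as the paper, which simply invokes the standard verifier-to-CSP-to-graph dictionary together with Lemmas~\ref{lem:kcol-compl} and~\ref{lem:kcol-soundness} and Remark~\ref{rem:wt-unwt}; you flesh out the bookkeeping that the paper leaves implicit, and the argument is correct. One small caution on the self-loop discussion: because $T$ has zero diagonal (Lemma~\ref{lemma:T}), every coordinate of $y$ differs from the corresponding coordinate of $x$, so $\underline{x}\neq\underline{y}$ with probability $1$ and no self-loop ever arises, even when $v=v'$ and $\sigma_v=\sigma_{v'}$; your fallback of ``absorbing'' a positive-probability degeneracy into the $O(\ln k/k^2)$ slack would actually be unacceptable here, since \maxcolor{k} requires \emph{perfect} completeness (a $k$-colorable graph in the {\sc Yes} case), not merely $1-o(1)$ completeness, so it is important that this event has probability exactly zero rather than merely being small.
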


\bibliographystyle{abbrv}
\bibliography{kcolor}

\begin{thebibliography}{10}

\bibitem{CST01}
P.~Crescenzi, R.~Silvestri, and L.~Trevisan.
\newblock On weighted vs unweighted versions of combinatorial optimization
  problems.
\newblock {\em Inf. Comput.}, 167(1):10--26, 2001.

\bibitem{DMR}
I.~Dinur, E.~Mossel, and O.~Regev.
\newblock Conditional hardness for approximate coloring.
\newblock In {\em Proceedings of the 38th Annual ACM Symposium on Theory of
  Computing}, pages 344--353, 2006.

\bibitem{FJ97}
A.~M. Frieze and M.~Jerrum.
\newblock Improved approximation algorithms for max k-cut and max bisection.
\newblock {\em Algorithmica}, 18(1):67--81, 1997.

\bibitem{GLST}
V.~Guruswami, D.~Lewin, M.~Sudan, and L.~Trevisan.
\newblock A tight characterization of {NP} with 3 query {PCP}s.
\newblock In {\em Proceedings of the 39th Annual IEEE Symposium on Foundations
  of Computer Science}, pages 8--17, 1998.

\bibitem{KKLP}
V.~Kann, S.~Khanna, J.~Lagergren, and A.~Panconesi.
\newblock On the hardness of approximating max k-cut and its dual.
\newblock {\em Chicago J. Theor. Comput. Sci.}, 1997, 1997.

\bibitem{Khot02}
S.~Khot.
\newblock On the power of unique 2-prover 1-round games.
\newblock In {\em Proceedings of the 34th Annual ACM Symposium on Theory of
  Computing}, pages 767--775, 2002.

\bibitem{KKMO}
S.~Khot, G.~Kindler, E.~Mossel, and R.~O'Donnell.
\newblock Optimal inapproximability results for {MAX-CUT} and other 2-variable
  {CSPs}?
\newblock {\em SIAM J. Comput.}, 37(1):319--357, 2007.

\bibitem{MOO}
E.~Mossel, R.~O'Donnell, and K.~Oleszkiewicz.
\newblock Noise stability of functions with low influences: invariance and
  optimality.
\newblock In {\em Proceedings of the 46th Annual IEEE Symposium on Foundations
  of Computer Science}, pages 21--30, 2005.

\bibitem{odonnell-wu}
R.~O'Donnell and Y.~Wu.
\newblock Conditional hardness for satisfiable {CSP}s.
\newblock In {\em Proceedings of the 41st Annual ACM Symposium on Theory of
  Computing}, 2009.
\newblock To appear.

\bibitem{PY91}
C.~H. Papadimitriou and M.~Yannakakis.
\newblock Optimization, approximation, and complexity classes.
\newblock {\em J. Comput. Syst. Sci.}, 43(3):425--440, 1991.

\bibitem{petrank}
E.~Petrank.
\newblock The hardness of approximation: Gap location.
\newblock {\em Computational Complexity}, 4:133--157, 1994.

\bibitem{Raghavendra08}
P.~Raghavendra.
\newblock Optimal algorithms and inapproximability results for every {CSP}?
\newblock In {\em Proceedings of the 40th ACM Symposium on Theory of
  Computing}, pages 245--254, 2008.

\bibitem{Sinclair92}
A.~Sinclair.
\newblock Improved bounds for mixing rates of markov chains and multicommodity
  flow.
\newblock {\em Combinatorics, Probability and Computing}, 1:351--370, 1992.

\bibitem{weichsel62}
P.~M. Weichsel.
\newblock The kronecker product of graphs.
\newblock {\em Proceedings of the American Mathematical Society}, 13(1):47--52,
  1962.

\end{thebibliography}

\appendix

\section{Proofs from Section~\ref{sec:3color-hard}}
\label{app:pf-3col}

\subsection{Proof of Lemma~\ref{lem:3col-compl}}
\label{app:pf-3col-compl}
\begin{proof}
  We define the coloring $\chi:V(G)\to [3]$ in the obvious way, with
  nodes $T$, $R$ and $F$ fixed to different colors. Then define 
\[\chi(x_i)
  = \begin{cases} \chi(T) &\mbox{if $\sigma(x_i)=1$,}\\ \chi(F)
    &\mbox{else.}\end{cases}\]
and similarly for the nodes $y_j$, $z_l$. Define
\[\chi(\overline{y_i})
  = \begin{cases} \chi(F) &\mbox{if $\sigma(y_j)=1$,}\\ \chi(T)
    &\mbox{else.}\end{cases}\]

  Now, for the constraints satisfied by this assignment, $(x_i \vee
  (Y_j = z_k)) \wedge (\overline{x_i} \vee (Y_j = z_l))$, consider the
  corresponding gadget. Let $\sugg(A) = [3]\setminus \{\chi(x_i),
  \chi(T)\}$ and $\sugg(B) = [3]\setminus \{\chi(Y_j), \chi(z_k)\}$ be
  the available colors to $A$ and $B$ which can properly color all
  edges incident to variables. Notice that none of these sets are
  empty and since $x_i \vee (Y_j = z_k)$ is true, at least one of
  these sets $\sugg(A)$ and $\sugg(B)$ has two elements in it. Hence
  there exists a coloring of $A$ and $B$ from sets $\sugg(A)$ and
  $\sugg(B)$ such that $\chi(A)\neq \chi(B)$. The same argument also
  holds for $A'$ and $B'$, therefore all edges in this gadget are
  properly colored.

  For the violated constraints, either $\sugg(A)$ or $\sugg(A')$ has
  one element. Augmenting that set with the color $\chi(x_i)$ will
  cause only one edge to be violated. 
\end{proof}

\subsection{Proof of Lemma~\ref{lem:3col-soundness}}

\begin{proof}  
   Since $\tau <
  m/2$, the coloring $\chi$ must give three different colors to the
  nodes $T$, $F$, and $R$.  If $\chi(x_i) = \chi(R)$, then randomly
  choosing $\chi(x_i)$ from $\{\chi(T), \chi(F)\}$ will, in
  expectation, make at most half of the local gadget edges going out
  of $x_i$ improperly colored, which is exactly the value
  $\Delta(x_i)/2$ gained. So we can assume that $\chi(x_i) \in
  \{\chi(T),\chi(F)\}$ for each $x_i$. A similar argument holds for
  the nodes $z_l$. Now consider the nodes $y_j$ and $\overline{y_j}$
  for a variable in $Y$. If $\chi(y_j) = \chi(R)$,
  $\chi(\overline{y_j})=\chi(R)$ or $\chi(x_j) =
  \chi(\overline{y_j})$, then randomly choosing
  $(\chi(y_j),\chi(\overline{y_j}))$ from
  $\{(\chi(T),\chi(F)),(\chi(F),\chi(T))\}$ will, in expectation, make
  at most half of the local gadget edges going out of nodes $y_j$ and
  $\overline{y_j}$ improperly colored, which is exactly the value
  $w_j$ gained.

  To summarize, we can assume that nodes $T$,$F$ and $R$ are colored
  differently, $\chi(x_i),\chi(Y_j),\chi(z_l) \in \{\chi(T),\chi(F)\}$
  and $\chi(y_j)\neq \chi(\overline{y_j})$. Thus all edges other than
  the edges inside the local gadgets are properly colored by
  $\chi$, and by assumption at most $\tau$ edges are miscolored by $\chi$.

  Now define the natural assignment $\sigma'$ that assigns a variable
  of $\cV$ the value $1$ if the associated variable received the color
  $\chi(T)$, and the value $0$ if its color is $\chi(F)$.


  Consider a local gadget, with all edges properly colored,
  corresponding to the constraint $(x_i \vee (Y_j = z_k)) \wedge
  (\overline{x_i} \vee (Y_j = z_l))$. Assume $\sigma'(x_i) = 0$, which
  implies $\chi(A) = \chi(R)$. Then both neighbors of $B$ besides $A$
  must have the same color, therefore $\sigma(Y_j) = \sigma(z_k)$. The
  other case when $\sigma'(x_i)=1$ is similar. Hence the assignment
  $\sigma'$ will satisfy this constraint.

Since the local gadgets corresponding to different constraints have disjoint sets of edges, it follows that the number of constraints violated by the assignment $\sigma'$ is at most $\tau$. 
\end{proof}

\section{Proof of Proposition~\ref{prop:stab}}
\label{app:noise-stab}

\begin{proof}
  Let $t=4$, $f_i:[Q]^N\rightarrow[0,1]$ denote the \texsup{$i$}{th}
  coordinate function of $f$, and let $\mu_i = \expct{}{f_i}$. Let
  $\alpha_0,\ldots, \alpha_{Q-1}$ be an orthonormal set of
  eigenvectors for $T$ with corresponding eigenvalues ${\lambda_0\ge
    \ldots\ge \lambda_{Q-1}}$, with $\rho=\rho(T)\le \frac{c}{q-1}$
  being the spectral radius of $T$. Notice that $T$ is symmetric so
  $\lambda_0 = 1$ and $\alpha_0$ is a constant vector. Therefore
  $\expct{}{f_i} = \hat{f}_i(\alpha_0) = \mu_i$. Then (using the
  notation from \cite{DMR}):
  \[ T^{\otimes N} \alpha_x = (\prod_{a \neq 0} \lambda_a ^{|x|_a}) \alpha_x \]
  and hence
  \[ T^{\otimes N} f_i = \sum_x (\prod_{a \neq 0} \lambda_a ^{|x|_a})
  \hat{f}_i(\alpha_x) \alpha_x. \]


  At this point, consider the Beckner operator, $T_{\rho}$ on
  $[Q]$. Since $\alpha_0$ is the uniform distribution, it is a
  constant vector, thus $\alpha_0,\alpha_1,\ldots,\alpha_{Q-1}$ is
  also an orthonormal basis for $T_{\rho}$. Consequently,
  \begin{eqnarray*}
    \langle f_i, T_{\rho} ^{\otimes N} f_i\rangle &=& 
    \sum_x (\prod_{a \neq 0} \rho ^{|x|_a})
    \hat{f}_i^2(\alpha_x) = \sum_x \rho^{|x|} \hat{f}^2_i(\alpha_x) = 
    \mathbb{S}_{\rho}(f_i)
  \end{eqnarray*}
  Thus
  \begin{eqnarray*}
    \langle f_i, T^{\otimes N} f_i\rangle &=& 
    \hat{f}_i^2(\alpha_0) - \hat{f}_i^2(\alpha_0) + 
      \sum_x \underbrace{(\prod_{a \neq 0} \lambda_a ^{|x|_a})}_
      {
        \begin{cases}
          \ge - \rho^{|x|} & \text{if }|x|\neq 0,  \\
          = 1 & \text{else.}
        \end{cases}
      }
      \hat{f}_i^2(\alpha_x)\\
      &\ge& 2 \mu_i^2 - 
      \sum_{x} \rho^{|x|} \hat{f}_i^2(\alpha_x) =
      2\mu_i^2 -\sum_{x:|x|\le 4} \rho^{|x|} \hat{f}_i^2(\alpha_x) 
      -\sum_{x:|x|> 4} \rho^{|x|} \hat{f}_i^2(\alpha_x)\\
      &\ge& 2\mu_i^2 
      -\sum_{x:|x|\le 4} \rho^{|x|} \hat{f}_i^2(\alpha_x) 
      - \rho^{4}\\
      &\ge&  2\mu_i^2 
      -\sum_{x:|x|\le 4} \rho^{|x|} \hat{f}_i^2(\alpha_x) 
      - q^{-3}\\
  \end{eqnarray*}
  At this point, let $\tilde{f}_i(x) = \sum_{x:|x|\le 4} (\prod_{a
    \neq 0} \lambda_a ^{|x|_a}) \hat{f}_i(\alpha_x)\alpha_x $ be the
  function having the same low-level coefficients with $f_i(x)$ and
  $0$ for the higher-levels. It is easy to verify that
  $\expct{}{\tilde{f}_i} = \mu_i$, $\infl_i(f_j) \ge
  \infl_i(\tilde{f}_j) = \infl_i^{\le 4}(f_j)$ and
  $\mathbb{S}_{\rho}(\tilde{f_j}) = \sum_{x:|x|\le 4} \rho^{|x|}
  \hat{f}_i^2(\alpha_x)$. In particular, our assumption $\sum_j
  \infl_i^{\le t}(f_j)=\sum_j \infl_i^{\le 4}(f_j) \le \delta$ implies
  $\sum_j \infl_i(\tilde{f}_j) \le \delta$.

  Let $\delta$ be a small enough constant such that
  $\mathbb{S}_{\frac{c}{q-1}}(\tilde{f}_i) \le
  \Gamma_{\frac{c}{q-1}}(\mu_i)+\eps$ for some small $\eps\le
  \frac{1}{q^3}$, from the Majority is Stablest Theorem \cite{MOO}. In
  \cite{KKMO}, $\Lambda_{\eta}(\mu)$ is used for $\Gamma_{\eta}(\mu)$
  and we will follow that convention instead. Below, for a real $x$, $[x]^+$ denotes $\max \{ x, 0\}$. Then
  \begin{eqnarray*}
    \sum_i \langle f_i, T^{\otimes N} f_i\rangle &\ge& \sum_i \left[2 \mu_i^2
    - \mathbb{S}_{\rho}(\tilde{f}_i)\right]  - q^{-2}\\
    &\ge& \sum_i \left[2 \mu_i^2 - \mathbb{S}_{\frac{c}{q-1}}(\tilde{f}_i)\right]  -
    q^{-2}\\
    &\ge& \sum_i \left[2 \mu_i^2 - \Lambda_{\frac{c}{q-1}}(\mu_i)\right]^+-2 q^{-2}\\
    &\ge& \frac{1}{q} - \frac{2 c \ln q}{q^2} - O\left(\frac{\ln\ln q}{q^2}\right)
  \end{eqnarray*}
  The last inequality is proved in the same way as Proposition 11.4 in
  \cite{KKMO}. The only difference is that we have
  \[ F(\mu_i) = \mu_i^2 + \frac{c}{q-1} 2 \mu_i^2 \ln(1/\mu_i)\cdot
  \left(1+C\frac{\ln\ln q}{\ln q}\right)\]
  and
  \[ \sum_{i=1}^q \left[ 2 \mu_i^2 -
    \Lambda_{\frac{c}{q-1}}(\mu_i)\right]^+ \ge \sum_{i=1}^q (2\mu_i^2
  - F(\mu_i))\] which is convex because $\mu_i\le (1/q)^{1/10}$ and
  minimized at $\mu_i=1/q$. In this case, we have
  \[ \sum_{i=1}^q (2\mu_i^2 - F(\mu_i)) \ge q\left(q^{-2} - 2 c q^{-3}
    \ln q (1 + C \ln \ln q / \ln q\right)\] from which the above claim
  follows.
\end{proof}

\section{Analysis of PCP verifier for \maxcolor{k}}
\label{app:kcol-PCP}

\subsection{Proof of Lemma~\ref{lem:kcol-compl}}
\begin{proof}
  Let $\ell:V\to \intset{2 R}$ be a labeling for $\mathcal{L}$
  satisfying all constraints in $\Pi$. Pick $\chi_v$ as the Long Code
  encoding of $\ell(v)$. Given any pair of vertices $v,v' \in V$ which
  share a common neighbor $u \in U$, and $x,y\in [k]^{2 R}$ pairs such
  that 
\[ \prob{}{\overline{y} \sim T^{\otimes R}(\overline{x})} = \prod_i
  T(\xiffy{(x_{2i-1},x_{2i})}{(y_{2i-1},y_{2i})})>0 \ , \]
 let $\pi,\pi'$
  be the projection functions and $\sigma_v,\sigma_{v'}$ be the permutations
  as defined in the description of the verifier. We have $\chi_v(x \ \circ\ \sigma_v) =
  x_{\sigma(\ell(v))}$ and $\chi_{v'}(y \ \circ\ \sigma_{v'}) =
  y_{\sigma'(\ell(v'))}$. Since $\pi(\ell(v)) = \pi'(\ell(v'))$, this
  implies $\sigma_v(\ell(v)),\sigma_{v'}(\ell(v')) \in \{2i-1,2i\}$ for some
  $i\le R$. But 
\[ T(\xiffy{(x_{2i-1},x_{2i})}{(y_{2i-1},y_{2i})})>0
  \implies \{x_{2i-1},x_{2i}\}\cap \{y_{2i-1},y_{2i}\}=\emptyset \ , \]
  therefore $\chi_{v}\circ \sigma_v(x) = x_{\sigma_v(\ell(v))} \neq
  y_{\sigma_{v'}(\ell(v'))}=\chi_{v'}\circ \sigma_{v'}(y)$. So the verifier always accepts.
\end{proof}
\subsection{Proof of Lemma~\ref{lem:kcol-compl}}
\begin{proof}
  For each node $v \in V$, let $f^v:[k]^{2 R} \rightarrow \Delta_k$ be
  the function $f^v(x) = e_{\chi_v(x)}$ where $e_i$
is the indicator
  vector of the \texsup{$i$}{th} coordinate. Let $\Gamma(u)$ denote
  the set of vertices adjacent to $u$ in the Label Cover graph.

  After arithmetizing, we can write the verifier's acceptance probability as
%
  \[
    \begin{array}{rclr}
      \prob{}{\text{acc}} &=&
      \expct{u,v,v'} {1-
        \sum_j \langle \overline{f^v_j \circ\ \sigma_v},
        T^{\otimes R} \overline{(f^{v'}_j \circ\ \sigma_{v'})}\rangle} \\
      &=& 1 - \expct{u}{\sum_j \expct{v,v'} {
          \langle \overline{f^v_j\ \circ\ \sigma_v},
          T^{\otimes R} \overline{(f^{v'}_j\ \circ\ \sigma_{v'})}\rangle}} \\
      &=& 1 - \expct{u} {\sum_j
        \langle \expct{v}{\overline{f^v_j\ \circ\ \sigma_v}},
        T^{\otimes R} \expct{v'}{
          \overline{f^{v'}_j\ \circ\ \sigma_{v'}}}\rangle} \\
      &=& 1 - \expct{u} {
        \sum_j\langle g^u_j, T^{\otimes R} g^u_j\rangle} & 
      \left(g^u_j = \expct{v\sim \Gamma(u)}
                          {\overline{f^v_j\circ \sigma_v}}\right)\\
      &\ge& 1 - 1/k + C \ln k/k^2 
    \end{array}
  \] where $g^u:[k^2]^{R}\to \Delta_{k}$ and some constant $C$.
  By averaging, for at least a fraction
  $\delta=(\eps/2) \ln k/k^2$ of vertices in $U$, we have
   \[
  \sum_j\langle g^u_j, T^{\otimes R} g^u_j\rangle \le 1/k-C \ln
  k/k^2\]
  Let these be ``good'' vertices. For a good vertex, by Proposition
  \ref{prop:stab}, there exist constants $\delta=\delta(k)$, $t=t(k)$ and 
  $i$
  such that $\infl_i^{\le t}(g^u)\ge\delta$. Let $\sugg_u = \{i| i\in
  \intset{R}\wedge \infl_{i}^{\le t} (g^u)\ge \delta\}$, so $|\sugg_u|\ge
  1$. By Observation
  \ref{obs:infbndsimplex}, $|\sugg_u|\le t/\delta$. For a good vertex
  $u$, and $j\in\sugg_u$:
  \[ \delta \le \infl_j^{\le t} (g^u) = \expct{v \sim
    \Gamma(u)}{\infl_{j}^{\le t} \bigl( \overline{f^v \circ \sigma_v} \bigr)} \] 
  Therefore, for at least a fraction $\delta/2$ of neighbors $v$ of $u$, $\infl^{\le
    t}_{j}(\overline{f^v \circ \sigma_v}) \ge \delta/2$.  For such $v$
  and $j$, by Claim \ref{claim:infrel}, $\infl_{2j-1}^{\le 2t}
    (f^v\circ \sigma_v) + \infl_{2j}^{\le 2t} (f^v\circ \sigma_v) \ge
    \delta/2$. Therefore for some $j\in[2 R]$, $\infl_j^{\le 2t}
  (f^v) \ge \delta/4$. Let $\sugg_v = \{j| j\in \intset{2 R}\wedge
    \infl_{j}^{\le 2t} (f^v)\ge \delta/4\}$. Again, $\sugg_v$ is not
  empty and $|\sugg_v|\le 8 t/\delta$.

  Following the decoding procedure in \cite{KKMO}, we deduce that it
  is possible to satisfy a fraction $\gamma' = \gamma'(\delta,t) = \gamma'(k)$
  of the constraints.
\end{proof}

\section{Handling $k$ not divisible by $3$ in Theorem~\ref{thm:kcolor-NP-hard}}
\label{app:k-mod-3}

We now argue how to handle the case when $k \mod 3 \neq 0$ in the
statement of Theorem~\ref{thm:kcolor-NP-hard}. Assume $k$ is of the
form $K+L$, where $K\equiv 0\pmod 3$ and $L = k \mod 3 \in
\{1,2\}$. We will give a reduction from \maxcolor{K}, which we already
showed to be NP-hard to approximate within a factor
$1-\frac{1}{33K}+\eps$, to \maxcolor{k}.

Let $G_K$ be an (unweighted) instance of \maxcolor{K} with $M$
edges. Construct a graph $H$ by adding $L$ new vertices
$u_1,\dots,u_L$ to $G_K$. Each $u_i$ is connected by an edge of weight
$\frac{d_v}{K}$ to each vertex $v \in V(G_K)$, where $d_v$ is the
degree of $v$ in $G_K$. If $L > 1$, $(u_1,u_2)$ is an edge in $H$ with
weight $\frac{M}{33K}$. The total weight of edges in $H$ equals
\[ M' = M + \frac{2LM}{K} + \frac{M(L-1)}{33K} \ . \]

Clearly if $G_K$ is $K$-colorable, then $H$ is $k$-colorable. For the
soundness part, suppose every $K$-coloring of $G_K$ miscolors at least
$\Bigl(\frac{1}{33K}-\eps\Bigr) M$ edges. Let $\chi$ be an optimal
$k$-coloring of $H$. We will prove that $\chi$ miscolors edges with
total weight at least $M(\frac{1}{33K}-\eps)$. This will certainly be the case
if $L > 1$ and $\chi(u_1)=\chi(u_2)$. So we can assume $\chi$ uses $L$
colors for the newly added vertices $u_i$. If $\chi(v) = \chi(u_i)$ for some $v
\in V(G_K)$, we can change $\chi(v)$ to one of the $K$ colors not used
to color $\{u_1,\dots,u_L\}$ so that the weight of
miscolored edges does not increase. Therefore, we can assume that
$\chi$ uses only $K$ colors to color the $G_K$ portion of $H$. But
this implies at least $M(\frac{1}{33K}-\eps)$ edges are miscolored by
$\chi$, as desired.

Thus every $k$-coloring of $H$ miscolors at least a fraction
\[ \frac{M(1/(33K)-\eps)}{M'} = \frac{(1/(33K)-\eps)}{1 + 2L/K + (L-1)/(33K)} \ge \frac{1}{33(k+L)
  + (L-1)} -\eps \]
of the total weight of edges in $H$. Since $L = k \mod 3$, the bound
stated in Theorem~\ref{thm:kcolor-NP-hard} holds.


\end{document}